\documentclass[creativecommons]{eptcs}

\usepackage{tikz}
\usetikzlibrary{positioning, arrows.meta, calc, fit, backgrounds}
\usepackage{graphicx}
\usepackage{booktabs}
\usepackage{stmaryrd}       
\usepackage{xspace}
\usepackage{amsmath,amssymb,amsthm}
\usepackage{mathpartir}
\usepackage{hyperref}
\usepackage[capitalise,noabbrev]{cleveref}

\providecommand{\doi}[1]{doi:\href{https://doi.org/#1}{\begingroup\urlstyle{rm}\nolinkurl{#1}\endgroup}}

\newcommand{\spar}{\mathbin{\|}}
\newcommand{\srec}[2]{\mu #1.\, #2}
\newcommand{\sbranch}[1]{\&\{#1\}}
\newcommand{\sselect}[1]{\oplus\{#1\}}
\newcommand{\statespace}[1]{\mathcal{L}(#1)}
\newcommand{\End}{\mathbf{end}}
\newcommand{\reach}{\twoheadrightarrow}
\newcommand{\reticulate}{\textsc{Reticulate}\xspace}
\newcommand{\camrev}[1]{#1}
\newcommand{\mungo}{\textsc{Mungo}\xspace}
\newcommand{\stmungo}{\textsc{StMungo}\xspace}

\tikzset{
  state/.style={circle, draw, minimum size=6mm, inner sep=1pt, font=\small},
  state top/.style={state, fill=blue!15},
  state bot/.style={state, fill=red!15},
  trans/.style={-{Stealth[length=5pt]}, font=\scriptsize},
  scc/.style={rounded corners=4pt, draw=gray, dashed, inner sep=4pt},
  sub/.style={rounded corners=4pt, draw=gray!60, fill=gray!5, inner sep=6pt},
  dot/.style={circle, draw, minimum size=3.4mm, inner sep=0pt},
}

\newtheoremstyle{spacedplain}%
  {1em plus 0.3em minus 0.2em}
  {0.7em plus 0.2em}
  {\itshape}
  {}
  {\bfseries}
  {.}
  {.5em}
  {\thmname{#1}\thmnumber{ #2}\thmnote{ (#3)}}
\theoremstyle{spacedplain}
\newtheorem{theorem}{Theorem}
\newtheorem{lemma}{Lemma}
\newtheorem{proposition}{Proposition}
\newtheorem{corollary}{Corollary}

\newtheoremstyle{spaceddef}%
  {1em plus 0.3em minus 0.2em}
  {0.7em plus 0.2em}
  {}
  {}
  {\bfseries}
  {.}
  {.5em}
  {\thmname{#1}\thmnumber{ #2}\thmnote{ (#3)}}
\theoremstyle{spaceddef}
\newtheorem{definition}{Definition}

\theoremstyle{remark}

\newtheoremstyle{conceptstyle}%
  {1em plus 0.3em minus 0.2em}
  {0.7em plus 0.2em}
  {}
  {}
  {\bfseries}
  {.}
  {.5em}
  {\thmnote{#3}}
\theoremstyle{conceptstyle}
\newtheorem*{concept}{}

\makeatletter
\let\c@lemma\c@theorem            
\let\c@proposition\c@theorem      
\let\c@corollary\c@theorem        
\let\c@definition\c@theorem       
\let\c@example\c@theorem          
\let\c@remark\c@theorem           
\makeatother

\crefname{theorem}{Theorem}{Theorems}
\Crefname{theorem}{Theorem}{Theorems}
\crefname{lemma}{Lemma}{Lemmas}
\Crefname{lemma}{Lemma}{Lemmas}
\crefname{proposition}{Proposition}{Propositions}
\Crefname{proposition}{Proposition}{Propositions}
\crefname{corollary}{Corollary}{Corollaries}
\Crefname{corollary}{Corollary}{Corollaries}
\crefname{definition}{Definition}{Definitions}
\Crefname{definition}{Definition}{Definitions}
\crefname{example}{Example}{Examples}
\Crefname{example}{Example}{Examples}
\crefname{remark}{Remark}{Remarks}
\Crefname{remark}{Remark}{Remarks}

\title{Session Type State Spaces Form Lattices}
\author{Alexandre Zua Caldeira
\institute{Independent Researcher, Berlin, Germany}
\email{zua@bica-tools.org}}

\def\titlerunning{Session Type State Spaces Form Lattices}
\def\authorrunning{A. Zua Caldeira}
\def\event{ICE 2026}

\begin{document}

\maketitle

\begin{abstract}
We prove that the state space of every well-formed session type,
quotiented by strongly connected components, forms a bounded
lattice; $n$-ary parallel composition yields product lattices.
Two consequences follow: duality preserves the lattice up to
isomorphism, and Gay--Hole width subtyping corresponds to lattice
embedding for non-recursive types.
We validate this on 108~benchmark protocols across networking,
databases, distributed systems, AI, and fault tolerance: all
form lattices, 93 distributive, 15 non-distributive. Mechanised
in Lean~4 with two independently developed tool implementations.
\end{abstract}

\section{Introduction}\label{sec:intro}

Session types~\cite{honda1998,vasconcelos2012sessions} describe
the communication protocols that govern interaction with objects,
channels, or services.
A session type specifies the allowed sequences of
operations\,---\,method calls, branches, selections\,---\,together with
their ordering constraints.
The \emph{state space} of a session type is a labelled transition
system (LTS) \emph{over states} obtained by ``executing'' the
type: each state represents a protocol stage, and each transition
a permitted action. The semantics is specification-level rather than
operational: states are the unit of reachability, not reduction
on terms.

State spaces are routinely used for verification, subtype checking,
and protocol analysis, yet their algebraic structure has received
little attention.
Gay and Hole~\cite{gay2005subtyping}, for instance, study subtyping
as a preorder on session types but do not investigate the structure
of the underlying state spaces.
In this paper we show that \emph{the state space of every
well-formed session type, after quotienting by strongly connected
components, is a bounded lattice}.
We call this lattice a \emph{reticulate}\,---\,from Latin
\emph{reticulatus}, the root of ``lattice'', giving session
type lattices their own name while remaining etymologically
precise.

The lattice structure arises naturally from the session type
constructors: branch (\&) and selection ($\oplus$) create shared
extrema that serve as meets and joins; recursion introduces cycles
absorbed by the
\emph{strongly-connected-component (SCC) quotient}\,---\,the standard
graph-theoretic operation that collapses each set of mutually
reachable states into one equivalence class, yielding a directed
acyclic graph (DAG); and $n$-ary parallel composition ($\|$) yields
$n$-dimensional product lattices.
The $\|$ constructor is where lattice structure becomes
\emph{necessary} rather than merely convenient.
Without~$\|$, the SCC quotient of a session type's state space is
a tree (pure sequencing), a diamond (binary choice), or a DAG
(nested choice).
With~$\|$, the $n$-fold product of state spaces inherently has
meets and joins corresponding to fork and synchronisation points
of concurrent execution.

Our contributions are:
(1) the \emph{Reticulate Theorem}: the SCC quotient of every
well-formed session type's state space is a bounded lattice,
by structural induction (\cref{sec:statespace});
(2) \emph{duality preservation}: swapping branch and selection
preserves state-space isomorphism (\cref{sec:duality});
(3) \emph{subtyping as lattice embedding}: \camrev{Gay--Hole width
subtyping is a label-preserving simulation on the state space whose
SCC-quotient projection is a sound lattice embedding} for non-recursive
types, with the selection case derived from the branch case via
duality (\cref{sec:subtyping});
(4) \emph{validation} (\cref{sec:impl-eval}): an empirical
study on 108~benchmark protocols using two independently
developed tools; mechanised verification of the proof in
Lean~4 (17~modules, zero \texttt{sorry}); and a distributivity
classification: $N_5$ from depth-unequal choice, $M_3$ from
wide choice ($n \geq 3$), which together account for all 15
non-distributive cases.

\Cref{sec:background} recalls the relevant background.
\Cref{sec:statespace} defines the state-space construction and
presents the proof.
\Cref{sec:consequences} develops the duality and subtyping
applications.
\Cref{sec:casestudy} demonstrates each result on a concrete
concurrent file protocol.
\Cref{sec:impl-eval} reports the implementation, the Lean~4
mechanisation, and the empirical study on 108~protocols.
\Cref{sec:related} discusses related work and
\cref{sec:conclusion} concludes.

\section{Preliminaries}\label{sec:background}

\subsection{Session Types}\label{sec:syntax}

The grammar below targets method-call protocols on a single shared
\emph{object}; channel-based bi- and multiparty communication
(send/receive, delegation, as in process
calculi~\cite{honda1998}) is out of scope unless the channel is
itself modelled as an object, in which case the present framework
applies directly.
We restrict attention to finite-state (regular) session types.

The grammar below extends typestate session
types~\cite{bica2010,mungo2016} with
parallel composition, letting the lattice construction reach
objects shared by concurrent clients:
\begin{align*}
S \;\;::=\;\; & \sbranch{m_1 : S_1,\; \ldots,\; m_n : S_n}
                  & \text{(branch, $n \geq 0$)} \\
  \mid\;\; & \sselect{l_1 : S_1,\; \ldots,\; l_n : S_n}
                  & \text{(selection, $n \geq 0$)} \\
  \mid\;\; & S_1 \spar \cdots \spar S_n
                  & \text{(parallel, $n \geq 2$)} \\
  \mid\;\; & \srec{X}{S}
                  & \text{(recursion)} \\
  \mid\;\; & X
                  & \text{(variable)} \\
  \mid\;\; & \End
                  & \text{(termination)} \\[6pt]
D \;\;::=\;\; & S \;\mid\; S, \; X_1 = S_1, \; \ldots, \; X_n = S_n
                  & \text{(declaration, equations $n \geq 0$)}
\end{align*}
A declaration~$D$ is a root session type~$S$ optionally followed by
named equations $X_i = S_i$. The equations decompose a protocol
into named, reusable phases; cyclic references express recursion
(including mutual recursion), subsuming $\srec{X}{S}$ as the special
case $X = S$. We retain
$\srec{X}{S}$ as a separate primitive, equivalent to the
single-equation declaration $X = S$.

The branch $\sbranch{m_1:S_1, \ldots, m_n:S_n}$ represents a state
in which the object unconditionally offers $n$ method calls; the
client exercises an external choice of which to invoke, and the
protocol continues as $S_i$ for the chosen $m_i$ ($n = 0$ is the
empty offer). In contrast, the selection
$\sselect{l_1:S_1, \ldots, l_n:S_n}$ represents an internal choice
in which the object returns one of $n$ enumeration labels $l_i$;
the client must dispatch on the returned label and proceed with
the corresponding continuation $S_i$ ($n = 0$ is the empty
internal choice). Parallel
composition $S_1 \spar \cdots \spar S_n$ is the flat $n$-ary form
(no nesting), with each $S_i$ describing one client's interactions
with the shared object; the sub-protocols run concurrently and may
cooperate through that object, with clients playing dual or
complementary roles (e.g., reader/writer, publish/subscribe). A variable $X$ is a reference to its
binder, either a $\srec{X}{\cdot}$ form or an equation $X = S$.

The constant $\End$ is the explicit termination primitive, not
synonymous with the empty branch $\sbranch{}$: $\sbranch{}$ merely
denotes a state with no enabled methods. Every terminated session has no methods, but the
converse fails\,---\,a parallel branch that has exhausted its
actions has no methods to offer yet is not ``terminated'' until its
siblings complete. We retain $\End$ as a primitive to preserve this
asymmetry, matching the additive units of classical linear
logic~\cite{wadler2012propositions}: $\sbranch{} \cong \top$,
$\sselect{} \cong \mathbf{0}$, dual via $\mathbf{0}^\perp = \top$.

\begin{definition}[Well-formedness]\label{def:wellformed}
A session type (or declaration) is \emph{well-formed} if all of
the following hold.
\begin{enumerate}
  \item \textbf{Closedness.} $S$ (resp.\ $D$) has no free
    variables, where binders are $\srec{X}{\cdot}$ and
    equation left-hand sides~$X$.
  \item \textbf{Contractiveness} (also called
    \emph{guardedness} in the process-calculus tradition).
    Every recursion
    $\srec{X}{S}$ has at least one constructor between
    $\srec{X}{\cdot}$ and each occurrence of $X$ in $S$; in
    equation systems, every cyclic reference path passes
    through at least one constructor.
  \item \textbf{Termination.} From every state, there is a
    path to $\End$.
  \item \textbf{Parallel closedness.} In every parallel
    composition $S_1 \spar \cdots \spar S_n$ occurring in
    $S$, no variable bound outside the parallel occurs free
    in any~$S_i$.
\end{enumerate}
\end{definition}

Closedness, contractiveness, and parallel closedness are standard
in the session-type literature~\cite{honda1998,vasconcelos2012sessions,mungo2016,bica2010}
and rule out genuinely ill-formed types\,---\,dangling references,
uncontractive recursions like $\srec{X}{X}$, and parallel branches
that capture outer recursion variables (\cref{lem:parallel}).
Termination is specific to our setting: it is necessary for the
bounded-lattice property (\cref{thm:reticulate}), not for
admissibility. Non-terminating types remain typeable, with both
tools exposing identical \texttt{-{}-non-termination} handling
(\cref{sec:impl-eval}).

\subsection{Lattice Theory}\label{sec:lattice-theory}

We recall the standard lattice-theoretic vocabulary used in
the proofs of \cref{thm:reticulate};
see~\cite{birkhoff1967lattice} for a textbook treatment.

\begin{concept}[Bounded lattice]
A bounded lattice is a poset $(L, \leq)$ with a least element
$\bot$ and a greatest element $\top$ such that every pair of
elements has a \emph{meet} (greatest lower bound) $a \wedge b$
and a \emph{join} (least upper bound) $a \vee b$.
\end{concept}

\begin{concept}[Product lattice]
Given bounded lattices $(L_1, \leq_1), \ldots, (L_n, \leq_n)$,
their product $L_1 \times \cdots \times L_n$ is ordered
componentwise: $(a_1, \ldots, a_n) \leq (b_1, \ldots, b_n)$ iff
$a_i \leq_i b_i$ for all~$i$. Meets, joins, top, and bottom are
componentwise, and the product of finitely many bounded
lattices is itself a bounded lattice.
\end{concept}

\begin{concept}[Distributive lattice]
A bounded lattice is \emph{distributive} if
$a \wedge (b \vee c) = (a \wedge b) \vee (a \wedge c)$ holds
for all elements; equivalently, if it contains no sublattice
isomorphic to $N_5$ (the pentagon) or $M_3$ (the diamond
with three atoms).
\end{concept}

\begin{concept}[Upward-closed subset]
A subset $D$ of a poset $(L, \leq)$ is \emph{upward-closed} if
$x \in D$ and $x \leq y$ imply $y \in D$; it is \emph{proper}
when $D \neq L$. The dual notion is \emph{downward-closed}.
\end{concept}

\begin{concept}[SCC quotient]
A \emph{strongly connected component} (SCC) of a directed graph
is a maximal set of mutually reachable vertices; the \emph{SCC
quotient} collapses each SCC to a single node, yielding a
DAG~\cite{tarjan1972scc}. In our setting, SCCs collect the
states a recursive protocol can revisit; on the quotient,
reachability defines a partial order, lifted to a bounded
lattice in \cref{lem:recursion}.
\end{concept}

\section{State Space Construction and the Reticulate Theorem}\label{sec:statespace}

\camrev{The construction is a two-step pipeline:}
\[
  \underset{\camrev{\text{type}}}{S}
  \;\xrightarrow{\camrev{\text{\,construct\,}}}\;
  \underset{\camrev{\substack{\text{state space}\\\text{(LTS, cyclic)}}}}{\statespace{S}}
  \;\xrightarrow{\camrev{\substack{\text{order by reachability,}\\\text{then SCC quotient}}}}\;
  \underset{\camrev{\substack{\text{bounded lattice}\\\text{the \emph{reticulate}}}}}{\statespace{S}/{\equiv}}
\]
\camrev{The two steps are the first two subsections: the state space is
built by structural induction (\cref{sec:construction}); then, after its
states are ordered by reachability, the SCC quotient collapses it into the
bounded lattice (\cref{sec:scaffolding}). \Cref{sec:reticulate-theorem}
then composes the per-constructor lemmas into the headline theorem
(\cref{thm:reticulate}). Proof sketches appear inline; full proofs are in
Appendix~\ref{appendix:proofs}.}

\subsection{State spaces and the inductive construction}\label{sec:construction}

\begin{definition}[State space]\label{def:statespace}
A state space is a tuple
$(Q, \Sigma, \delta, q_\top, q_\bot)$ where $Q$
is a finite set of states, $\Sigma$ is a finite alphabet of action
labels, $\delta \subseteq Q \times \Sigma \times Q$ is the
transition relation, $q_\top \in Q$ is the initial state, and
$q_\bot \in Q$ is the terminal state. The construction of
\cref{def:construction} assigns a state space $\mathcal{L}(S)$ to
each well-formed session type~$S$.
\end{definition}

\begin{definition}[State-space construction $\mathcal{L}(S)$]%
\label{def:construction}
The state space $\mathcal{L}(S) = (Q, \Sigma, \delta,
q_\top, q_\bot)$ is defined by structural induction on
closed well-formed~$S$ (\cref{def:wellformed}).
\camrev{The induction yields state spaces, not bounded lattices.
Of the grammar's two base cases, $\End$ and~$X$, only $\End$ takes
a clause below: under closedness every bare~$X$ is bound (by an
enclosing $\srec{X}{\cdot}$ or a definition $X = S'$), so each
occurrence is built as a \emph{placeholder} state and then
eliminated by cases~(4)--(5), which redirect it as a back-edge to
the already-built entry state.}
For closed well-formed $S$, the SCC quotient
$\mathcal{L}(S)/{\equiv}$ is a bounded lattice
(\cref{def:scc-quotient,thm:reticulate}).

\begin{enumerate}
  \item \textbf{End.}
    $\mathcal{L}(\End) = (\{q_0\}, \emptyset, \emptyset, q_0, q_0)$
    \,---\, a single state, both initial and terminal.

  \item \textbf{Branch} $\sbranch{m_i : S_i}_{i=1..n}$.
    \begin{align*}
      Q        &= \{q_0\} \cup \textstyle\bigcup_i
                  (Q_i \setminus \{q_\bot^i\}) \cup \{q_\bot\}, \\
      \Sigma   &= \{m_1,\ldots,m_n\} \cup \textstyle\bigcup_i \Sigma_i, \\
      \delta   &= \{(q_0, m_i, q_\top^i)\}_i
                  \cup \textstyle\bigcup_i \delta_i[q_\bot^i := q_\bot],
    \end{align*}
    where $Q_i, \Sigma_i, \delta_i, q_\top^i, q_\bot^i$ are the
    components of $\mathcal{L}(S_i)$; the substitution
    $[q_\bot^i := q_\bot]$ shares $q_\bot$ across children;
    $q_\top = q_0$. For $n = 0$, $Q = \{q_0, q_\bot\}$,
    $\Sigma = \{\tau\}$, $\delta = \{(q_0, \tau, q_\bot)\}$ with
    $\tau$ a silent action for vacuous termination; the SCC
    quotient is the chain $q_0 > q_\bot$.

  \item \textbf{Selection} $\sselect{l_i : S_i}_{i=1..n}$.
    Identical tuple to branch at the state-space level: the
    construction depends only on the continuations
    (\cref{lem:polarity}), not on polarity.

  \item \textbf{Recursion} $\srec{X}{S'}$.
    Build $\mathcal{L}(S') = (Q', \Sigma', \delta', q_\top',
    q_\bot')$ with fresh placeholder states for each occurrence
    of~$X$; let $P_X \subseteq Q'$ collect these placeholders.
    Then
    $Q = Q' \setminus P_X$,
    $\Sigma = \Sigma'$,
    $\delta = \delta'[P_X := q_\top']$ (redirect every transition
    targeting a placeholder to $q_\top'$, creating back-edges),
    $q_\top = q_\top'$, $q_\bot = q_\bot'$.

  \item \textbf{Named definitions}
    $D = S, X_1 = S_1, \ldots, X_k = S_k$ (root session type
    followed by $k \geq 0$ equations).
    Build $\mathcal{L}(S)$ and each $\mathcal{L}(S_j)$ with
    placeholders $p_{i,j}$ for each occurrence of $X_j$. For each
    name $X_j$, replace every transition targeting any $p_{*,j}$
    with a transition to $q_\top^{S_j}$, and remove the
    placeholders. The result is
    $\mathcal{L}(D) = (Q, \Sigma, \delta, q_\top^S, q_\bot^S)$
    where $Q = (Q^S \cup \bigcup_j Q^{S_j})$ with placeholders
    removed, $\Sigma = \Sigma^S \cup \bigcup_j \Sigma^{S_j}$,
    and $\delta$ is the global redirection.

  \item \textbf{Parallel} $S_1 \spar \cdots \spar S_n$.%
    \label{def:product-construction}
    $\mathcal{L}(S_1 \spar \cdots \spar S_n) =
      (Q_1 \times \cdots \times Q_n, \Sigma, \delta_\times,
       (q_\top^1,\ldots,q_\top^n), (q_\bot^1,\ldots,q_\bot^n))$
    with $\Sigma = \bigcup_i \Sigma_i$ and $\delta_\times$ the
    componentwise advance relation:
    $(s_1, \ldots, s_i, \ldots, s_n) \xrightarrow{a}
    (s_1, \ldots, s_i', \ldots, s_n)$ for each
    $s_i \xrightarrow{a} s_i'$ in $\delta_i$.
\end{enumerate}
\end{definition}

\paragraph{Inductive, not coinductive.}
\camrev{\Cref{def:construction} is structural recursion on the finite
six-constructor syntax; the carrier is state spaces, not bounded lattices.
Recursion uses a back-edge rather than unfolding: for $\srec{X}{S}$ the body
$S$ is built once and each occurrence of $X$ is redirected to the top
$q_\top$ (cases~4--5), forming a finite cycle. The construction never forms
the infinite unfolding\,---\,the greatest fixed point\,---\,and uses no
coinduction; the SCC quotient collapses the cycle. The Lean mechanisation is
likewise coinduction-free.}

\subsection{SCC quotienting and the bounded lattice}\label{sec:scaffolding}

\camrev{A bounded lattice is a poset with a least and a greatest element and a
meet and a join for every pair (\cref{sec:lattice-theory}). We establish
these for $\statespace{S}/{\equiv}$: reachability orders it as a poset
(\cref{def:reach,def:scc-quotient}); \cref{prop:extrema} gives its least and
greatest elements; and the meets and joins are supplied constructor by
constructor, by structural induction on $S$, assembled by the Reticulate
Theorem (\cref{thm:reticulate}).}

\begin{definition}[Reachability]\label{def:reach}
Write $s_1 \reach s_2$ (``$s_1$ reaches $s_2$'') iff there
exists a path from $s_1$ to~$s_2$ in $(Q, \delta)$.
\end{definition}

\begin{proposition}[Preorder]\label{prop:reach-preorder}
The relation $\reach$ is reflexive and transitive, hence a
preorder; cycles from recursion prevent antisymmetry.
\end{proposition}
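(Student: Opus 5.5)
Reflexivity and transitivity follow directly from \cref{def:reach}, which defines $s_1 \reach s_2$ as the existence of a path in $(Q,\delta)$. For reflexivity, take the empty path of length zero from $s$ to itself, so $s \reach s$ for every $s \in Q$. For transitivity, suppose $s_1 \reach s_2$ via a path $\pi$ and $s_2 \reach s_3$ via a path $\pi'$. The concatenation $\pi\pi'$ is again a path, since the last state of $\pi$ is the first state of $\pi'$ and every step is a transition in $\delta$. Hence $s_1 \reach s_3$, and $\reach$ is a preorder. Equivalently, $\reach$ is the reflexive--transitive closure of the one-step relation $\{(q,q') \mid (q,a,q') \in \delta\}$, and closures of this kind are preorders by construction.

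For the remark that cycles prevent antisymmetry, I would give a concrete witness from case~(4) of \cref{def:construction}. Take $S = \srec{X}{\sbranch{m : X,\; n : \End}}$, which is closed, contractive and terminating. The body has entry $q_0$ with transitions $q_0 \xrightarrow{m} p_X$ and $q_0 \xrightarrow{n} q_\bot$. Redirecting the placeholder $p_X$ to $q_\top' = q_0$ yields a self-loop $q_0 \xrightarrow{m} q_0$.

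A self-loop alone does not break antisymmetry, so the counterexample must have a cycle through two distinct states. I would instead use $\srec{X}{\sbranch{a : \sbranch{b : X},\; c:\End}}$. Its state space has states $q_0$ and $q_1$ with $q_0 \xrightarrow{a} q_1 \xrightarrow{b} q_0$. Then $q_0 \reach q_1$ and $q_1 \reach q_0$ while $q_0 \neq q_1$, so $\reach$ is not antisymmetric.

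The only point needing care is that the witness is genuinely produced by the construction, with two distinct states after placeholder elimination. Once that is checked, there is no real obstacle. This failure of antisymmetry is exactly what motivates quotienting by mutual reachability, the SCC relation, in the next definition.
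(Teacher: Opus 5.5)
Your proof is correct and follows the argument the paper intends. The paper gives no separate proof, treating reflexivity and transitivity as immediate from \cref{def:reach} via the empty path and path concatenation. Your two-state witness $\srec{X}{\sbranch{a : \sbranch{b : X},\; c : \End}}$ is the SMTP type of \cref{fig:smtp} with labels renamed, and the paper uses that type's cycle $q_0 \to q_1 \to q_0$ to show where antisymmetry fails.
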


\begin{definition}[SCC quotient and its order]\label{def:scc-quotient}
\camrev{The \emph{SCC quotient} $\statespace{S}/{\equiv}$ identifies mutually
reachable states, collapsing each recursion cycle to one class and yielding
an antisymmetric quotient. It is ordered by} $[s_1] \leq [s_2]$ iff
$s_2 \reach s_1$: a state is smaller when less of the protocol
remains, so execution descends from $q_\top$ to $q_\bot$.
\end{definition}

\begin{proposition}[Extrema]\label{prop:extrema}
\camrev{For every well-formed $S$, the initial state $q_\top$ reaches every
state and the terminal state $q_\bot$ is reachable from every state; hence
the quotient $\statespace{S}/{\equiv}$ has greatest element $q_\top$ and
least element $q_\bot$.}
\end{proposition}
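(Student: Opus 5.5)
We prove the two reachability claims separately, each by structural induction on the construction of \cref{def:construction}. The order-theoretic consequence then follows directly from \cref{def:scc-quotient}. If $q_\top \reach s$ for every $s$, then $[s] \leq [q_\top]$ for all classes, so $[q_\top]$ is the greatest element. Symmetrically, $s \reach q_\bot$ for every $s$ gives $[q_\bot] \leq [s]$, so $[q_\bot]$ is the least element. Uniqueness of these extrema is automatic because the quotient is antisymmetric.

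\emph{Initial state reaches everything.} We induct on $S$, treating placeholders as ordinary states until they are redirected.
\begin{itemize}
\item For $\End$ the claim is trivial.
\item For a branch or selection, any state lies either in some $Q_i \setminus \{q_\bot^i\}$ or is the shared $q_\bot$. By the induction hypothesis, $q_\top^i$ reaches it inside $\mathcal{L}(S_i)$, and $q_\bot$ is reached via $q_\top^i \reach q_\bot^i$. The substitution $[q_\bot^i := q_\bot]$ maps that path to a path in the composite, and prefixing the edge $q_0 \xrightarrow{m_i} q_\top^i$ gives the result. The case $n=0$ is immediate.
\item For recursion, a path in $\mathcal{L}(S')$ from $q_\top'$ to a non-placeholder state $s$ can be taken to avoid placeholders as intermediate vertices. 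This holds because placeholders have no outgoing edges (they are fresh leaves), so a path through one could not continue. Hence the path survives the redirection $\delta'[P_X := q_\top']$ unchanged.
\item For named definitions, the same leaf argument applies inside each $\mathcal{L}(S_j)$. Every $S_j$ that is actually used is reached from the root through a placeholder edge redirected to $q_\top^{S_j}$. Equations never referenced from the root should be discarded, or else the construction restricted to the reachable part; we will state this convention explicitly.
\item For parallel composition, $(q_\top^1,\dots,q_\top^n)$ reaches $(s_1,\dots,s_n)$ by advancing the components one after another, using the componentwise advance relation and the induction hypothesis in each coordinate.
\end{itemize}

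\emph{Every state reaches the terminal state.} This is exactly the Termination clause of \cref{def:wellformed}, provided we identify ``$\End$'' with $q_\bot$ of the whole state space. That identification is what needs an argument, so we verify it by the same induction. In branch and selection the terminal states of the children are merged into $q_\bot$. Recursion and definitions keep $q_\bot'$ and $q_\bot^S$ respectively. For parallel composition we use termination of each component separately: parallel closedness ensures each $S_i$ is itself closed and well-formed with its own terminal state, so $(s_1,\dots,s_n)$ reaches $(q_\bot^1,\dots,q_\bot^n)$ by driving each coordinate to its terminal state in turn.

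\emph{Main obstacle.} The delicate case is recursion and definitions for the downward claim. A state inside a recursive body may reach $q_\bot'$ in $\mathcal{L}(S')$ only along a path that passes through $q_\top'$ after redirection. Termination alone does not identify which terminal is meant if the body's $\End$ occurrences are several. We handle this by insisting, as the construction does, that every $\End$ inside a branch is already merged into the single $q_\bot$ of that subterm. We then invoke Termination on the final, redirected graph, rather than trying to transport paths through the redirection.
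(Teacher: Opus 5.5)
Your overall plan has the same shape as the paper's: induction over the clauses of \cref{def:construction}. Taking the downward claim from the Termination clause on the final graph is sharper than the paper's one-line ``back-edges preserve extrema'', which silently needs Termination (see the witness after \cref{cor:wfpar}).

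The gap is in the upward induction. With placeholders treated as leaf states, your hypothesis ``$q_\top$ reaches every state of $\mathcal{L}(S)$'' is false for open subterms. You also never state the base case for a variable, and the hypothesis already fails there. Concretely, in the SMTP body (\cref{fig:smtp}) the subterm $\sbranch{\mathit{send} : X}$ has $q_1 \to p_X$ and a shared $q_\bot$ that no state reaches, because its only arm ends in a placeholder. So your branch step ``$q_\bot$ is reached via $q_\top^i \reach q_\bot^i$'' has no usable $i$ at that node. This is the phenomenon you yourself flag in your last paragraph (a body state reaches $q_\bot'$ only after redirection), now hitting $q_\top \reach q_\bot$.

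The repair uses what you already have:
\begin{enumerate}
\item Prove only that every state \emph{other than} $q_\bot$ is reachable from $q_\top$. Terminals are sinks in every clause and placeholders are leaves, so such paths never pass through $q_\bot^i$ or a placeholder. They therefore survive end-identification and redirection unchanged.
\item Obtain $q_\top \reach q_\bot$ for the final closed type from Termination at $q_\top$.
\item In the parallel case you do need the full claim per coordinate, since a coordinate may sit at $q_\bot^i$. This is available because each $S_i$ is closed, but parallel closedness gives only closedness. Termination of $S_i$ must be derived, e.g.\ by projecting onto the $i$-th coordinate a path that stays inside the product and ends at its terminal.
\end{enumerate}

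A second, smaller problem concerns named definitions. The remark ``definitions keep $q_\bot^S$'' does not establish that $\End$ is the declared terminal. Clause~5 as written never identifies the terminals $q_\bot^{S_j}$ of the equation bodies with $q_\bot^S$, so an $\End$ inside an equation body is $q_\bot^{S_j}$. For $\sbranch{a : X},\ X = \sbranch{b : \End}$, the declared terminal $q_\bot^S$ (the root branch's fresh shared terminal) is reached from no state, while the $\End$ reached via $b$ is $q_\bot^X$. The downward claim then fails outright. You must either add end-identification across equations, which the paper's examples tacitly assume, or restrict to session types without equations, as the paper's proof effectively does.
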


\camrev{It remains to exhibit a meet and a join for every pair.}
Each grammar constructor corresponds to a lattice operation on its
sub-lattices: $\End$
yields the one-element lattice; choice glues sub-lattices through a
fresh top into a diamond; recursion absorbs an upward-closed set into
its body's top via the SCC quotient; named definitions iterate that
absorption; parallel takes a Cartesian product. Three structural
innovations make this work jointly: \emph{end-identification}
(singleton $\statespace{\End}$ with shared $q_\bot$ across branch and
selection children; \cref{lem:end}, \cref{lem:choice}), the
\emph{SCC quotient} (\cref{def:scc-quotient}; \cref{lem:recursion},
\cref{lem:equations}), and the \emph{$\|$-product} (\cref{lem:parallel}).
Each is load-bearing on at least one constructor case.
\camrev{Each of the
following lemmas treats one constructor, assuming the sub-state-spaces are
bounded lattices and extending the property to the composite;
\cref{thm:reticulate} composes the cases.}

\begin{lemma}[End]\label{lem:end}
$\statespace{\End}/{\equiv}$ is a bounded lattice.
\end{lemma}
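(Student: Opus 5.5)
The plan is to unfold \cref{def:construction} case~(1) and check the bounded-lattice axioms directly on the resulting one-point structure. By construction, $\statespace{\End} = (\{q_0\}, \emptyset, \emptyset, q_0, q_0)$. The first step is to compute the reachability preorder of \cref{def:reach} on this state space. Since $\delta = \emptyset$, the only path is the empty path from $q_0$ to itself, so $\reach$ is exactly $\{(q_0,q_0)\}$. This relation is reflexive, as \cref{prop:reach-preorder} requires.

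The second step is to form the SCC quotient of \cref{def:scc-quotient}. The single state $q_0$ forms a single strongly connected component. The quotient therefore has carrier $\{[q_0]\}$ with order $[q_0] \leq [q_0]$. Antisymmetry is trivial, so this is a poset, namely the one-element poset.

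The third step verifies the bounded-lattice conditions. The extrema can be read off from \cref{prop:extrema}, or checked by hand: $q_\top = q_\bot = q_0$ reaches and is reached by every state, so $[q_0]$ is both greatest and least element, giving $\top = \bot = [q_0]$. For meets and joins, the only pair is $([q_0],[q_0])$. The only element is $[q_0]$, so it is the unique lower bound and hence the greatest lower bound; dually, it is the least upper bound. Thus $[q_0] \wedge [q_0] = [q_0] \vee [q_0] = [q_0]$.

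There is no real obstacle here. The one point to state explicitly is the convention that a one-element lattice with $\top = \bot$ counts as a bounded lattice, since the definition in \cref{sec:lattice-theory} does not require $\top \neq \bot$. It is also worth stressing that this trivial case is what makes end-identification work later: sharing $q_\bot$ across children in \cref{lem:choice} relies on $\statespace{\End}$ contributing exactly one state, which is both its top and its bottom.
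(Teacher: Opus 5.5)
Your proposal is correct and follows essentially the same argument as the paper: $\statespace{\End}$ has the single state $q_0 = q_\top = q_\bot$, its SCC quotient is the one-element poset $\{[q_0]\}$, and $[q_0]$ is at once top, bottom, meet and join of the only pair. Your explicit computation of $\reach$ and your remark that the definition allows $\top = \bot$ just spell out that same check in more detail.
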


\begin{lemma}[Polarity erasure]\label{lem:polarity}
For any labels $a_1, \ldots, a_n$ and well-formed types
$S_1, \ldots, S_n$,
\[
\statespace{\sbranch{a_1{:}S_1, \ldots, a_n{:}S_n}} \;=\;
\statespace{\sselect{a_1{:}S_1, \ldots, a_n{:}S_n}}.
\]
\end{lemma}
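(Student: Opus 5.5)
The plan is to unfold \cref{def:construction} on both sides and check that the two resulting tuples coincide componentwise. The argument is essentially definitional. Clause~(3) stipulates that selection produces the same tuple as branch. So the real content is that this stipulation is well-posed: the tuple computed in clause~(2) depends only on the multiset of labelled continuations $(a_i, S_i)$, and never on the polarity tag ($\&$ versus $\oplus$).

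\textbf{Step 1: fix the recursive calls.} The children $S_1,\ldots,S_n$ are the same on both sides. So the recursive calls give the same state spaces $\statespace{S_i} = (Q_i,\Sigma_i,\delta_i,q_\top^i,q_\bot^i)$. Here we need that the construction is a function of the syntax of $S_i$ alone, so that fresh state names (and placeholder names, if $S_i$ contains free variables bound by an enclosing recursion) are chosen identically. I would make this explicit by fixing a deterministic naming scheme, for instance states indexed by syntactic positions/paths in the term. Alternatively, one reads ``$=$'' as equality up to renaming of states, which is all that the later lemmas use.

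\textbf{Step 2: compare the five components.} With the children fixed, I compare the components of clause~(2):
\begin{itemize}
\item $Q = \{q_0\}\cup\bigcup_i(Q_i\setminus\{q_\bot^i\})\cup\{q_\bot\}$;
\item $\Sigma = \{a_1,\ldots,a_n\}\cup\bigcup_i\Sigma_i$;
\item $\delta = \{(q_0,a_i,q_\top^i)\}_i\cup\bigcup_i\delta_i[q_\bot^i:=q_\bot]$;
\item $q_\top = q_0$, and $q_\bot$ is the shared terminal state.
\end{itemize}
Each component is built only from the labels $a_i$ and the child state spaces. None mentions the connective. Hence the tuples agree.

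\textbf{Step 3: boundary cases.} For $n=0$, both sides give $(\{q_0,q_\bot\},\{\tau\},\{(q_0,\tau,q_\bot)\},q_0,q_\bot)$. When the choice occurs inside a $\srec{X}{\cdot}$ or an equation system, the later redirection steps (clauses~4--5) act only on placeholders and transitions. They therefore also treat both sides identically.

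\textbf{Main obstacle.} The only delicate point is the naming/freshness issue in Step~1. If fresh states were drawn from a global supply, the two constructions could differ by a bijective renaming. I would handle it by proving the slightly stronger statement that $\statespace{\cdot}$ commutes with the polarity-swapping map on subterms. This goes by induction on the term, showing that the polarity swap never affects any name choice. The lemma then holds on the nose, or at worst up to a label-preserving isomorphism fixing $q_\top$ and $q_\bot$, and that suffices for \cref{thm:reticulate} and the duality results.
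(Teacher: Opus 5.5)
Your proposal is correct and takes essentially the same route as the paper: both unfold the choice clause of the construction and observe that every component of the tuple is built from the labels and the children's state spaces, never from the connective. Your handling of fresh names (equality on the nose, or up to a label-preserving isomorphism fixing $q_\top$ and $q_\bot$) is a refinement the paper leaves implicit; it does not change the argument.
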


\begin{proof}
Both constructions create a fresh $q_\top$ with $n$ transitions
to sub-state-space entries, sharing $q_\bot$ — depending only on
the continuations, not on the polarity. Polarity is a typing
discipline, not a structural property of the state space.
\end{proof}

\begin{lemma}[Choice]\label{lem:choice}
If $\statespace{S_i}/{\equiv}$ is a bounded lattice for each~$i$, then
\camrev{so is $\statespace{\sbranch{a_1 : S_1, \ldots, a_n : S_n}}/{\equiv}$;
by \cref{lem:polarity} the same holds for
$\sselect{a_1 : S_1, \ldots, a_n : S_n}$}.
\end{lemma}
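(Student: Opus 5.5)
The plan is to describe the quotient of the composite explicitly as a ``glued sum'' of the children's quotients and then compute meets and joins by cases. Write $L_i = \statespace{S_i}/{\equiv}$, and let $L$ be the quotient of $\statespace{\sbranch{a_1 : S_1, \ldots, a_n : S_n}}$. By \cref{def:construction}(2), its states are the fresh $q_0$, the disjoint sets $Q_i \setminus \{q_\bot^i\}$, and one shared $q_\bot$. Its edges are the $n$ edges leaving $q_0$ together with the children's edges, with each $q_\bot^i$ renamed to $q_\bot$. The degenerate case $n = 0$ is immediate: the quotient is the two-element chain $q_0 > q_\bot$, which is a bounded lattice. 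We therefore assume $n \geq 1$.

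First I will establish an auxiliary fact by a separate induction on \cref{def:construction}: \emph{the terminal state $q_\bot$ of every closed well-formed type is a sink}, i.e.\ it has no outgoing transitions. The cases are as follows. For $\End$, the single state has no transitions. For choice, $q_\bot$ only receives the renamed children's bottoms, which are sinks by induction. For recursion and equations, redirection only changes the \emph{targets} of transitions into placeholders, and a placeholder is never the terminal state. For parallel, a tuple of sinks is a sink.

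Using the sink fact, I will determine reachability in the composite:
\begin{itemize}
  \item $q_0$ has no incoming edges, so it forms a singleton SCC.
  \item Any path starting inside child $i$ either stays in $Q_i \setminus \{q_\bot^i\}$ or enters $q_\bot$ and stops there. Hence for $s, t$ in child $i$, $s \reach t$ in the composite iff $s \reach t$ in $\statespace{S_i}$.
  \item No non-bottom state of child $i$ reaches a non-bottom state of child $j \neq i$.
\end{itemize}
Consequently the SCCs of the composite are $\{q_0\}$, $\{q_\bot\}$ and the non-bottom SCCs of each $L_i$. Note that $\{q_\bot\}$ is a singleton SCC because $q_\bot$ is a sink, and $[q_\bot^i]$ is already a singleton in $L_i$. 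So, as a poset, $L$ is $\{q_0\}$ placed on top of the disjoint union of the $L_i \setminus \{\bot_i\}$, with a single common bottom, and each $L_i$ embedded as an interval $[\bot, \top_i]$. Its extrema are $[q_0]$ and $[q_\bot]$, consistent with \cref{prop:extrema}.

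Meets and joins then follow by cases on two elements $x, y$:
\begin{itemize}
  \item If either element is $\top$ or $\bot$, the meet and join are trivial.
  \item If $x \in L_i$ and $y \in L_j$ are non-extremal with $i \neq j$, their only common lower bound is $\bot$ and their only common upper bound is $\top = [q_0]$. Hence $x \wedge y = \bot$ and $x \vee y = \top$.
  \item If both lie in the same $L_i$, their common lower bounds in $L$ are exactly those in $L_i$, so the meet is the meet taken in $L_i$. Their common upper bounds are those in $L_i$ together with $\top$. The join $x \vee_i y$ in $L_i$ lies below $\top_i < \top$, so it is still the least common upper bound.
\end{itemize}
The same argument covers selection via \cref{lem:polarity}, since the two state spaces are literally equal. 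I expect the main obstacle to be the reachability-isolation step, and specifically justifying that the shared $q_\bot$ creates no cross-child paths or SCC merges. This is why I prove the sink property first, and it is also where closedness of the $S_i$ is used: there are no back-edges into $q_0$ or into a sibling.
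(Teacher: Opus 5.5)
Your proof is correct and takes the same route as the paper: the composite quotient is the children's quotients glued along a shared bottom $q_\bot$ under a fresh top $q_\top$, so meets and joins within one arm are inherited, while across arms the meet is $q_\bot$ and the join is $q_\top$. Your sink lemma and reachability-isolation step spell out what the paper's one-line proof takes for granted, namely that end-identification creates no cross-arm paths or SCC merges; note only that, for the recursion case to go through, this auxiliary induction must be stated for open bodies with placeholders, not just for closed types.
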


\begin{proof}
\camrev{Adjoining a common top $q_\top$ and bottom $q_\bot$ turns the
sub-lattices into a single bounded lattice: meets and joins within a
sub-lattice are inherited, and across sub-lattices, which are disjoint
above $q_\bot$, $q_\top$ is the only common upper bound and $q_\bot$ the
only common lower bound (\cref{fig:diamond}).}
\end{proof}

\begin{figure}[t]
\centering
\begin{tikzpicture}[node distance=12mm, baseline=(top.base)]
  \node[font=\footnotesize\bfseries] at (0,3.55) {(a) before end-identification};
  \node[state top] (top) at (0,2.4) {$q_\top$};
  \foreach \i/\x in {1/-1.4, n/1.4} {
     \node[dot, fill=blue!15] (t\i) at (\x,1.3) {};
     \node[dot] (m\i) at (\x,0.2) {};
     \node[dot, fill=red!15] (b\i) at (\x,-0.9) {};
     \draw[trans, gray!70] (t\i) -- (m\i);
     \draw[trans, gray!70] (m\i) -- (b\i);
     \begin{scope}[on background layer]
       \node[sub, fit=(t\i)(m\i)(b\i)] {};
     \end{scope}
  }
  \node at (0,0.2) {$\cdots$};
  \node[font=\scriptsize] at (-2.15,-0.9) {$q^{S_1}_\bot$};
  \node[font=\scriptsize] at (2.15,-0.9) {$q^{S_n}_\bot$};
  \draw[trans] (top) -- node[above left,pos=0.55] {$a_1$} (t1);
  \draw[trans] (top) -- node[above right,pos=0.55] {$a_n$} (tn);
\end{tikzpicture}%
\hspace{3mm}\raisebox{-1.4cm}{$\Longrightarrow$}\hspace{1mm}%
\begin{tikzpicture}[node distance=12mm, baseline=(top.base)]
  \node[font=\footnotesize\bfseries] at (0,3.55) {(b) after end-identification};
  \node[font=\scriptsize] at (0,2.95) {$q_\top = q^{S_1}_\top \vee \cdots \vee q^{S_n}_\top$};
  \node[state top] (top) at (0,2.4) {$q_\top$};
  \foreach \i/\x in {1/-1.4, n/1.4} {
     \node[dot, fill=blue!15] (t\i) at (\x,1.3) {};
     \node[dot] (m\i) at (\x,0.2) {};
     \draw[trans, gray!70] (t\i) -- (m\i);
     \begin{scope}[on background layer]
       \node[sub, fit=(t\i)(m\i)] {};
     \end{scope}
  }
  \node at (0,0.7) {$\cdots$};
  \node[state bot, very thick] (bot) at (0,-0.9) {$q_\bot$};
  \node[font=\scriptsize] at (0,-1.6) {$q_\bot = q^{S_1}_\top \wedge \cdots \wedge q^{S_n}_\top$};
  \draw[trans, gray!70] (m1) -- (bot);
  \draw[trans, gray!70] (mn) -- (bot);
  \draw[trans] (top) -- node[above left,pos=0.55] {$a_1$} (t1);
  \draw[trans] (top) -- node[above right,pos=0.55] {$a_n$} (tn);
\end{tikzpicture}
\caption{\camrev{%
  Constructing $\statespace{\sbranch{a_1{:}S_1,\dots,a_n{:}S_n}}$.
  (a)~A fresh $q_\top$ fans by one transition $a_i$ into each
  sub-lattice $\statespace{S_i}$, each with its own least element
  $q^{S_i}_\bot$.
  (b)~End-identification glues every $q^{S_i}_\bot$ onto the single
  reserved terminal $q_\bot$; the fresh $q_\top$ is then the only
  common upper bound and $q_\bot$ the only common lower bound,
  making the disjoint sub-lattices one bounded lattice.
  For $n = 2$ with singleton arms this is the diamond lattice $M_2$.
  The join and meet annotations in~(b) assume $n \geq 2$; at $n = 1$ the
  fresh $q_\top$ simply prepends one element above $\statespace{S_1}$.}}
\label{fig:diamond}
\end{figure}

\camrev{Three lemmas cover recursion and named equations.
\Cref{lem:absorption} is the order-theoretic tool: collapsing a
proper upward-closed set onto $q_\top$ preserves boundedness.
\Cref{lem:recursion} applies it to a single recursive binder,
whose back-edges form an SCC absorbed into $[q_\top]$ (\cref{fig:smtp}).
\Cref{lem:equations} treats named declarations, which are more
expressive than $\srec{X}{S}$: cyclic references (single or mutual
recursion) reuse the same absorption, while an acyclic shared name
--- like $\End$ --- collapses duplicated subtrees, giving lattice
rather than tree structure.}

\begin{lemma}[Top absorption]\label{lem:absorption}
Let $(L, \leq)$ be a bounded lattice and $D$ a \emph{proper}
upward-closed subset of $L$ containing $\top$
(equivalently, $\bot \notin D$).
Define $L' = (L \setminus D) \cup \{\top\}$ with the restricted
ordering.
Then $L'$ is a bounded lattice.
\end{lemma}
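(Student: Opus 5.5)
The plan is to check the bounded-lattice axioms directly on $L' = (L \setminus D) \cup \{\top\}$ with the order inherited from $L$. I will not pass through any earlier lemma, since this is a purely order-theoretic statement. Two facts about $D$ do all the work. First, $D$ is upward-closed, so its complement $L \setminus D$ is downward-closed. Second, $D$ is proper, so $\bot \notin D$: if $\bot \in D$, upward closure would force $D = L$. Conversely $\bot \notin D$ makes $D \neq L$, which gives the equivalence stated in the lemma. Hence $\bot \in L'$, and $\top \in L'$ by construction, so $L'$ contains both extrema of $L$. Since $L'$ carries the restricted order, it is a poset with least element $\bot$ and greatest element $\top$.

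\emph{Meets.} I would show that $L'$ is closed under the meet of $L$.
\begin{itemize}
\item If one argument is $\top$, then $a \wedge \top = a \in L'$.
\item If $a, b \in L \setminus D$, suppose $a \wedge b \in D$. Since $a \wedge b \leq a$ and $D$ is upward-closed, this would put $a \in D$, which is impossible. So $a \wedge b \in L \setminus D \subseteq L'$.
\end{itemize}
An element of $L$ that lies in $L'$ and is a greatest lower bound in $L$ is also one in the subposet $L'$. So meets in $L'$ coincide with meets in $L$.

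\emph{Joins.} This is the one step where $L'$ genuinely differs from $L$, and it is where I expect the only real argument. Take $a, b \in L'$. If either is $\top$, the join is $\top$. Otherwise $a, b \in L \setminus D$, and there are two cases.
\begin{itemize}
\item If $a \vee_L b \notin D$, it lies in $L'$ and is the least upper bound there too.
\item If $a \vee_L b \in D$, I claim the join in $L'$ is $\top$. Let $c \in L'$ be any upper bound of $a$ and $b$. Then $a \vee_L b \leq c$, so upward closure puts $c \in D$. The only element of $D$ kept in $L'$ is $\top$, so $c = \top$. Thus $\top$ is the unique, hence least, upper bound of $a$ and $b$ in $L'$.
\end{itemize}

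Combining the two extrema, all binary meets and all binary joins, $L'$ is a bounded lattice. In general $L'$ is not a sublattice of $L$, because joins may jump to $\top$. That is exactly what is needed when an SCC is absorbed into $[q_\top]$ in \cref{lem:recursion}. Finiteness of $L$ is not used anywhere in this argument.
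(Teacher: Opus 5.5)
Your proof is correct and follows the paper's argument step for step: properness gives $\bot \notin D$, meets stay in the downward-closed complement $L \setminus D$, and a join of $L$ that lands in $D$ becomes $\top$ in $L'$, because upward closure forces every upper bound in $L'$ into $D$. You are slightly more careful than the paper in handling the case where an argument is $\top$, and in noting that a greatest lower bound in $L$ that lies in $L'$ remains one in the subposet.
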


\begin{proof}
Since $D$ is upward-closed and proper, $\bot \notin D$, so
$L'$ retains both extrema. Meets in $L \setminus D$ stay in
$L \setminus D$ (the complement of an upward-closed set is
downward-closed) and are preserved. Joins in $L \setminus D$
are likewise preserved, except those whose $L$-join lies in
$D$, which collapse to $\top$ in $L'$ — any non-$\top$ upper
bound would itself lie in $D$ by upward-closure.
\end{proof}

\begin{lemma}[Recursion]\label{lem:recursion}
Let $S$ be a body with a single free variable $X$, such that the
state space $\mathcal{L}(S)$ with placeholder state $p_X$
satisfies: (i)~$q_\top^S$ reaches every non-placeholder state;
(ii)~the placeholder-free SCC quotient is a bounded lattice; and
(iii)~the recursion $\srec{X}{S}$ satisfies the
\textbf{Termination} clause of \cref{def:wellformed}.
Then $\statespace{\srec{X}{S}}/{\equiv}$ is a bounded lattice.
\end{lemma}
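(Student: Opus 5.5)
The plan is to reduce \cref{lem:recursion} to \cref{lem:absorption}. Write $L = \mathcal{L}(S)$ with placeholder $p_X$, and let $L_0$ be the placeholder-free SCC quotient, which is a bounded lattice by hypothesis~(ii). By \cref{def:construction}(4), $\mathcal{L}(\srec{X}{S})$ has the same non-placeholder states as $L$. Its transitions are those of $L$, except that every edge $s \to p_X$ is redirected to $s \to q_\top^S$.

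\emph{Step 1: identify the new SCC.} Let $B$ be the set of states $s$ with a path to $p_X$ in $L$. By hypothesis~(i), $q_\top^S$ reaches every such $s$. After redirection, every $s \in B$ reaches $q_\top^S$ through its back-edge. So $B \cup \{q_\top^S\}$ lies in a single SCC, namely $[q_\top^S]$. Conversely, a state joins the class of $q_\top^S$ only if it can return to $q_\top^S$, and the only new edges target $q_\top^S$. Hence the new top class is exactly $B \cup \{q_\top^S\}$, together with whatever was already equivalent to $q_\top^S$ in $L_0$. Every other SCC is unchanged, since any new cycle must pass through a redirected edge and therefore through $q_\top^S$.

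\emph{Step 2: the absorbed set is proper and upward-closed.} Let $D \subseteq L_0$ be the image of that top class. For upward-closure: if $[s] \in D$ and $[t] \geq [s]$, then $t \reach s$ and $s \reach p_X$, so $t \reach p_X$ and $[t] \in D$. The class $[q_\top]$ lies in $D$ trivially. For properness: by hypothesis~(iii) and contractiveness, $q_\bot$ has no outgoing edges, so it cannot reach $p_X$; hence $\bot \notin D$. This is where the Termination clause is used. It guarantees that $q_\bot$ survives as a genuine terminal that is not swallowed into the cycle, and this is what gives \cref{prop:extrema} for the recursive type.

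\emph{Step 3: the order agrees with $L'$.} I must show that the order on the remaining classes in $\statespace{\srec{X}{S}}/{\equiv}$ is the restriction of $L_0$'s order, with $D$ collapsed to $\top$. The new edges only lead into $[q_\top]$. So for classes $[s], [t] \notin D$, any new path $s \reach t$ would have to pass through $q_\top$ and then descend. But such an $s$ cannot reach $q_\top$, because reaching $q_\top$ means reaching $p_X$ in $L$ or being $q_\top$ itself, which would put $[s]$ in $D$. Therefore reachability among non-$D$ classes is unchanged. The quotient is then exactly $L' = (L_0 \setminus D) \cup \{\top\}$, and \cref{lem:absorption} gives that it is a bounded lattice.

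\emph{Main obstacle.} I expect the hardest step to be Step~1/3: verifying that redirection creates no spurious order relations and no SCC other than the enlarged top class. The first argument I would try is the one above: every new edge targets $q_\top^S$, so any path that uses a new edge factors through $q_\top^S$, and every state that can reach $q_\top^S$ lies in $D$. If that works, no other case analysis is needed.
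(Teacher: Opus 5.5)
Your proposal is correct and follows the paper's route: both arguments show that the classes absorbed into $[q_\top^S]$ form a proper upward-closed subset $D$ of the body quotient and then invoke \cref{lem:absorption}, and your Step~3 (redirection adds no order relations among classes outside $D$, so the new quotient really is $(L_0\setminus D)\cup\{\top\}$) makes explicit a step the paper leaves implicit. For properness, your sink argument is sounder than the paper's appeal to Termination: $q_\bot$ has no outgoing edges, so it cannot reach $p_X$, and (ii) makes the minimal class $[q_\bot]$ the bottom of $L_0$, whereas the paper's observation that $[q_\bot]$ is reachable from $[q_\top^S]$ does not by itself keep $[q_\bot]$ out of $D$; note, though, that the sink property is an invariant of \cref{def:construction}, not a consequence of (iii) or contractiveness as you claim, so your argument does not actually use (iii).
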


\begin{proof}
Adding back-edges from each placeholder $p_X$ to $q_\top^S$
collapses these states with $q_\top^S$ into one SCC class.
The set $D$ of quotient classes absorbed into $[q_\top^S]$
is upward-closed (mutual reachability through $q_\top^S$
propagates upward), and the Termination clause
(\cref{def:wellformed}) ensures $[q_\bot] \notin D$, making
$D$ a \emph{proper} upward-closed subset of the body quotient.
\Cref{lem:absorption} then yields the lattice.
\end{proof}

\begin{lemma}[Named definitions]\label{lem:equations}
Let $D = S, X_1 = S_1, \ldots, X_k = S_k$ be a well-formed
declaration \camrev{in which every equation is used either for
\emph{recursion} (each $X_j$ lies on a cyclic reference path) or
\emph{at most once}}. If $\statespace{S}/{\equiv}$ and each
$\statespace{S_i}/{\equiv}$ are bounded lattices, then
$\statespace{D}/{\equiv}$ is a bounded lattice. \camrev{More
generally it suffices that the sharing be \emph{non-reconvergent}:
no non-recursive name is reached from two distinct branches of the
same choice.}
\end{lemma}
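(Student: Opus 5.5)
The plan is to reduce \cref{lem:equations} to the two mechanisms already established: \cref{lem:recursion} (via \cref{lem:absorption}) for the cyclic names, and a tree-like gluing for the acyclic names that behaves like the choice construction of \cref{lem:choice}. First, build the \emph{reference graph} on the names $X_1,\ldots,X_k$, with an edge $X_i \to X_j$ whenever $X_j$ occurs in $S_i$ (and edges from the root $S$). Then take its own SCC decomposition. Names lying in a nontrivial SCC (or on a self-loop) are the recursive ones; all others are non-recursive and, by hypothesis, are used at most once or at least non-reconvergently. The proof proceeds by induction on this acyclic condensation of the reference graph, processing names in reverse topological order, so that when a name is handled, every name it refers to outside its own SCC has already been resolved.

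\textbf{Non-recursive names.} If $X_j$ is referenced exactly once, redirecting the unique placeholder to $q_\top^{S_j}$ is literally the same as substituting $S_j$ syntactically for $X_j$. Hence $\statespace{D}$ coincides with the state space of the inlined declaration. The lattice property then follows from the inductive assumption on $\statespace{S_j}/{\equiv}$ together with the constructor lemmas (\cref{lem:choice}, and \cref{lem:end} for the shared terminal). In the non-reconvergent generalisation, a name may be reached several times, but never from two distinct arms of one choice. I would argue that each such reuse then lies on a single chain of the order, so that the shared copy sits below one arm only. In that case the meet and join computations of \cref{lem:choice}, which need the arms to be disjoint above $q_\bot$, go through unchanged.

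\textbf{Recursive names.} For a cyclic SCC $\{X_{j_1},\ldots,X_{j_m}\}$, I would pick one entry name $X_{j_1}$ and inline the others along the cycle to obtain a body with a single free variable $X_{j_1}$. I would then check hypotheses (i)--(iii) of \cref{lem:recursion}:
\begin{itemize}
\item (i) holds because of closedness and the construction;
\item (ii) holds by the inductive hypothesis applied to the placeholder-free body;
\item (iii) holds because Termination is inherited from well-formedness of $D$.
\end{itemize}
Mutual recursion then collapses all entry tops of the SCC into a single class, which is an upward-closed proper set. \Cref{lem:absorption} therefore applies exactly as in \cref{lem:recursion}. Iterating this over the condensation, and noting that each step preserves ``bounded lattice after quotient,'' gives the result for $\statespace{D}/{\equiv}$. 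The extrema are supplied by \cref{prop:extrema}.

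\textbf{Main obstacle.} I expect the hardest step to be the non-reconvergent sharing case. There, one has to show that two incomparable elements from different regions still have a unique least upper bound and greatest lower bound, even though a shared subgraph could create two maximal common lower bounds. For example, if a shared name were reached from both arms of a choice, its top would be a second common lower bound besides $q_\bot$, which would break the meet. The plan is to use non-reconvergence to prove the invariant that every common lower bound of elements in distinct arms is $q_\bot$. I would establish this by induction on the condensation, tracking for each shared name the unique choice arm through which it is reachable.
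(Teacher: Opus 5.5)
\textbf{Gap: reconvergent sharing of recursive names.} Your decomposition is essentially the paper's: cyclic names are handled by absorption as in \cref{lem:recursion}, acyclic names are inlined, and non-reconvergence is the sharing condition. Organising it over the condensation of the reference graph is cleaner than the paper's case split. The gap is in the step ``iterating this over the condensation \ldots\ each step preserves bounded lattice after quotient.''

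Once a recursive SCC of names is resolved, it is an ordinary node of the acyclic condensation. The hypothesis forbids reconvergence only for \emph{non-recursive} names, so nothing prevents a recursive name from being entered from two distinct arms of a choice outside its SCC. Absorption collapses only the states that reach back into the SCC, so those outside referencers are not absorbed. You are then in exactly the situation you yourself flag as fatal.

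Concretely, let $D = S,\; X = \sbranch{g : X,\; h : \End},\; Y = \sbranch{i : Y,\; j : \End}$ with $S = \sbranch{a : \sbranch{c : X,\; d : Y},\; b : \sbranch{e : X,\; f : Y}}$.
\begin{itemize}
\item $D$ is well-formed, both names lie on cyclic reference paths, and there are no non-recursive names, so the stated hypothesis holds.
\item Reading name references as $\End$, the components quotient to a diamond and two chains, so they are lattices.
\item In $\statespace{D}/{\equiv}$ all SCCs are singletons. The two arm states $[q_a],[q_b]$ have common lower bounds $[q_\top^X],[q_\top^Y],[q_\bot]$, and $[q_\top^X],[q_\top^Y]$ are incomparable. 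Hence $[q_a]\wedge[q_b]$ does not exist, and neither does $[q_\top^X]\vee[q_\top^Y]$.
\end{itemize}
So the invariant you plan to prove, that distinct arms share only $[q_\bot]$, is false under the stated hypothesis, and this step cannot be closed. The statement as worded admits this counterexample. The paper's proof makes the same leap when it treats cyclic references ``exactly as in'' \cref{lem:recursion}. The hypothesis your invariant actually requires is non-reconvergence for \emph{every} node of the condensation, recursive SCCs included.

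\textbf{Second problem: inlining within an SCC.} Reducing a mutually recursive SCC to \cref{lem:recursion} by inlining along the cycle is a separate flaw. Inlining agrees with the shared construction only for single-use names. Inside an SCC a name may be referenced several times, and inlining then duplicates the part of its body that does not return to the cycle. For instance, take root $X_1$ with $X_1 = \sbranch{a : X_2,\; b : \sbranch{c : X_2}}$ and $X_2 = \sbranch{d : X_1,\; e : \sbranch{f : \End}}$. The shared state space quotients to a three-element chain. The inlined $\srec{X_1}{\cdots}$ quotients to a four-element diamond with two copies of the $\sbranch{f : \End}$ state.

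So \cref{lem:recursion} applied to the inlined body does not describe $\statespace{D}/{\equiv}$. Hypothesis~(ii) for that body is not ``the inductive hypothesis'' but a gluing claim that needs its own proof. The absorption must be argued on the shared graph itself: there, the states reaching some entry of the SCC form one class, and everything else hangs below it.
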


\begin{proof}
\camrev{Three cases, by how each name is used.
A \emph{cyclic} reference behaves like recursion: its back-edges
merge states into an SCC, which \cref{lem:absorption} collapses
(one absorption per SCC, in any order), exactly as in
\cref{lem:recursion}.
A \emph{non-recursive name used along a single path} inlines its
sub-lattice beneath the one state that references it, adding no new
incomparable pairs, so every meet and join survives.
The remaining case is \emph{reconvergence}: when one name is
reached from two distinct branches of a choice, those branches
rejoin at the name's entry, leaving two incomparable states with
two incomparable common lower bounds, so their meet does not exist. Such declarations are
only bounded \emph{partial} lattices, handled by the Reticulate
Theorem (\cref{sec:reticulate-theorem}).}
\end{proof}

\begin{figure}[t]
\centering
\begin{tikzpicture}[node distance=12mm]
  \node[font=\footnotesize\bfseries] at (0,0.8) {(a) Before quotient};
  \node[state top] (q0) at (0,0) {$q_0$};
  \node[state] (q1) at (-1.5,-1.5) {$q_1$};
  \node[state bot] (qb) at (1.5,-1.5) {$q_\bot$};

  \draw[trans] (q0) -- node[above left] {\textit{mail}} (q1);
  \draw[trans] (q0) -- node[above right] {\textit{quit}} (qb);
  \draw[trans, dashed, gray] (q1) to[bend left=40]
    node[left, font=\scriptsize, text=gray] {\textit{send}} (q0);

  \begin{scope}[on background layer]
    \node[scc, fit=(q0)(q1),
      label={[font=\scriptsize,text=gray]left:SCC}] {};
  \end{scope}
\end{tikzpicture}
\hspace{16mm}
\begin{tikzpicture}[node distance=12mm]
  \node[font=\footnotesize\bfseries] at (0,0.8) {(b) After SCC quotient};
  \node[state top] (scc) at (0,0) {$[q_0,q_1]$};
  \node[state bot] (qb) at (0,-1.8) {$q_\bot$};

  \draw[trans] (scc) -- node[right] {\textit{quit}} (qb);
\end{tikzpicture}
\caption{%
  SMTP session type $\srec{X}{\sbranch{\mathit{mail} :
  \sbranch{\mathit{send} : X},\; \mathit{quit} : \End}}$.
  (a)~Before quotient: the recursion redirects \textit{send}
  from the placeholder to $q_0$, creating a cycle
  $q_0 \to q_1 \to q_0$ (dashed).
  (b)~After SCC quotient: the cycle collapses into $[q_0, q_1]$,
  yielding a 2-element chain. \camrev{Here $D = \{q_0, q_1\}$ is
  upward-closed and omits $q_\bot$, so \cref{lem:absorption}
  applies; the tool confirms 3 states, 3 transitions.}}
\label{fig:smtp}
\end{figure}

\begin{lemma}[Parallel]\label{lem:parallel}
If $\statespace{S_i}/{\equiv}$ is a bounded lattice for each
$i \in \{1, \ldots, n\}$, then
$\statespace{S_1 \spar \cdots \spar S_n}/{\equiv}$ is a bounded
lattice.
\end{lemma}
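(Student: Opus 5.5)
The plan is to show that the SCC quotient of the product state space is order-isomorphic to the product of the component quotients, and then invoke the product-lattice fact from \cref{sec:lattice-theory}. Write $L_i = \statespace{S_i}/{\equiv}$ and let $\reach_\times$ be reachability in $\delta_\times$. The central claim is the componentwise characterisation
\[
(s_1,\ldots,s_n) \reach_\times (t_1,\ldots,t_n) \iff s_i \reach t_i \text{ for all } i .
\]

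Both directions are routine. For ($\Rightarrow$), every step of $\delta_\times$ advances exactly one coordinate along a $\delta_i$-transition and leaves the others fixed. Projecting a product path onto coordinate $i$ therefore gives a $\delta_i$-path, after deleting stuttering steps. For ($\Leftarrow$), take paths $s_i \reach t_i$ for each $i$ and concatenate them: first advance coordinate $1$ along its path with the others frozen, then coordinate $2$, and so on. Each individual step is a $\delta_\times$-transition by \cref{def:construction}(6). This interleaving argument is where \emph{parallel closedness} (\cref{def:wellformed}) is needed. Because no outer recursion variable occurs free in any $S_i$, no back-edge in $\statespace{S_1\spar\cdots\spar S_n}$ jumps out of the product into an enclosing binder's top. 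Hence the transition relation really is the pure componentwise relation, with no cross-coordinate edges.

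From the characterisation it follows that two tuples are mutually reachable iff each pair of coordinates is mutually reachable. So the SCC classes of the product are exactly the products $[s_1]\times\cdots\times[s_n]$ of component classes. The map $[(s_1,\ldots,s_n)] \mapsto ([s_1],\ldots,[s_n])$ is then well defined and bijective onto $L_1\times\cdots\times L_n$, since every tuple of states is a product state. By the characterisation and \cref{def:scc-quotient}, it is an order isomorphism for the componentwise order. The product of finitely many bounded lattices is a bounded lattice, so the quotient is one too. Its top is $([q_\top^1],\ldots,[q_\top^n])$ and its bottom is $([q_\bot^1],\ldots,[q_\bot^n])$, which agree with the designated $q_\top,q_\bot$ of the construction and with \cref{prop:extrema}.

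I expect the main obstacle to be the justification of the ($\Leftarrow$) direction together with the well-definedness of the quotient map. Both rely on the product state space being exactly the free interleaving of the components. I will first state explicitly that by parallel closedness each $\statespace{S_i}$ is built as a closed state space, since its placeholders are all resolved inside $S_i$. I will then argue that the recursion and named-definition redirections of cases (4)--(5), when applied to an enclosing term, never target states inside the product. If an enclosing binder's placeholder could occur inside some $S_i$, the redirected edge would change a single coordinate to a non-product state and break the isomorphism. Ruling this out is exactly what the well-formedness condition is for.
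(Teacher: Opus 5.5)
Your proposal is correct and takes the same approach as the paper. You show that the product's SCC quotient is order-isomorphic to $\prod_i \statespace{S_i}/{\equiv}$, because SCCs of the product are products of component SCCs, and then apply the fact that finite products of bounded lattices are bounded lattices. Your two-direction proof that reachability in the product is componentwise reachability supplies the detail the paper only asserts.
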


\begin{proof}
By the parallel case of \cref{def:construction} and the
independence of the $n$ branches (Termination ensures every
branch reaches $\End$; Parallel closedness ensures no shared
variables), the product inherits SCC structure componentwise:
$\statespace{S_1 \spar \cdots \spar S_n}/{\equiv} \cong
\prod_{i} \statespace{S_i}/{\equiv_i}$.
The product of finitely many bounded lattices is itself a
bounded lattice (\cref{sec:lattice-theory}), illustrated for
$n = 2$ in \cref{fig:product}.
\end{proof}

\begin{figure}[!hbt]
\centering
\begin{tikzpicture}[node distance=8mm, every node/.style={font=\small}]
  \node[state top] (00) at (0,0) {\scriptsize 0,0};
  \node[state] (10) at (-1.0,-0.9) {\scriptsize 1,0};
  \node[state] (01) at ( 1.0,-0.9) {\scriptsize 0,1};
  \node[state] (20) at (-2.0,-1.8) {\scriptsize 2,0};
  \node[state] (11) at ( 0,-1.8)   {\scriptsize 1,1};
  \node[state] (02) at ( 2.0,-1.8) {\scriptsize 0,2};
  \node[state] (21) at (-1.0,-2.7) {\scriptsize 2,1};
  \node[state] (12) at ( 1.0,-2.7) {\scriptsize 1,2};
  \node[state bot] (22) at (0,-3.6) {\scriptsize 2,2};

  \draw[trans] (00) -- node[above left] {$a$} (10);
  \draw[trans] (00) -- node[above right] {$c$} (01);
  \draw[trans] (10) -- node[above left] {$b$} (20);
  \draw[trans] (10) -- node[right] {$c$} (11);
  \draw[trans] (01) -- node[left] {$a$} (11);
  \draw[trans] (01) -- node[above right] {$d$} (02);
  \draw[trans] (20) -- node[right] {$c$} (21);
  \draw[trans] (11) -- node[above left] {$b$} (21);
  \draw[trans] (11) -- node[above right] {$d$} (12);
  \draw[trans] (02) -- node[left] {$a$} (12);
  \draw[trans] (21) -- node[below left] {$d$} (22);
  \draw[trans] (12) -- node[below right] {$b$} (22);
\end{tikzpicture}
\caption{%
  Product lattice
  $\statespace{\sbranch{a : \sbranch{b : \End}} \spar
  \sbranch{c : \sbranch{d : \End}}}$.
  The 9 states form a $3 \times 3$ grid lattice with
  componentwise ordering.}
\label{fig:product}
\end{figure}

\subsection{The Reticulate Theorem}\label{sec:reticulate-theorem}

The per-constructor lemmas now compose into the headline result,
followed by tightness and a remark on realisability.

\begin{theorem}[Reticulate Theorem]\label{thm:reticulate}
For every well-formed session type~$S$,
$\statespace{S}/{\equiv}$ is a bounded lattice.
\end{theorem}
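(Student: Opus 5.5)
The proof goes by structural induction on the closed well-formed type~$S$ (or declaration~$D$), following the clauses of \cref{def:construction}. The per-constructor lemmas of \cref{sec:scaffolding} supply each inductive step. Because recursion bodies are open terms, I would strengthen the hypothesis. For every subterm $S'$ occurring in a well-formed $S$, built with placeholder states for its free variables, the claim is: (i)~$q_\top^{S'}$ reaches every non-placeholder state, and (ii)~the placeholder-free SCC quotient is a bounded lattice. This is exactly the shape demanded by \cref{lem:recursion}. Claim~(i) holds directly from the construction: every clause adds its new states reachable from the fresh or product top. \Cref{prop:extrema} then supplies the extrema at the closed level.

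The cases run as follows.
\begin{itemize}
\item $\End$ is \cref{lem:end}.
\item Branch and selection follow from \cref{lem:choice}, using \cref{lem:polarity} to reduce selection to branch. Placeholders in the children are carried along unchanged, since they are not yet redirected.
\item Recursion $\srec{X}{S'}$ applies \cref{lem:recursion} to the strengthened hypothesis for $S'$. The Termination clause of \cref{def:wellformed} discharges its side condition~(iii). Contractiveness guarantees that $p_X$ is not $q_\top^{S'}$ itself.
\item Parallel uses \cref{lem:parallel}. Parallel closedness is what justifies applying the hypothesis to each component as a \emph{closed} type, because no outer placeholder appears inside a component.
\item Named declarations use \cref{lem:equations}, with one absorption via \cref{lem:absorption} per cyclic SCC. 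Acyclic names are inlined beneath their unique referencing state.
\end{itemize}

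The main obstacle is the named-definitions case. \Cref{lem:equations} only covers declarations whose non-recursive names are non-reconvergent, and its proof exhibits reconvergent sharing in which a meet fails to exist. I would first try to remove this case by a normalisation step. Inline every acyclic equation at each use site, duplicating the subtree. This gives a declaration whose only equations lie on cycles, equivalently nested $\srec{X}{\cdot}$ binders with mutual recursion handled by iterated absorption. The difficulty is that the theorem speaks about $\statespace{D}/{\equiv}$ as built with sharing, not about the inlined version. The sharing identifies states, so the two state spaces differ.

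To close this gap I would check whether the sharing collapse is a lattice quotient of the inlined lattice. The collapse identifies two copies of the same sub-lattice that sit below incomparable branches. After end-identification these copies already meet only at $q_\bot$. So I would attempt to show that the identified entry becomes a common lower bound which is the greatest one, namely the meet. Failing that, I would restrict the statement to the reading suggested just after \cref{lem:equations}. Reconvergent declarations then receive the weaker bounded-partial-lattice conclusion, and the theorem proper is stated for the $\srec{X}{\cdot}$ fragment plus non-reconvergent equations.

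The remaining bookkeeping is to confirm that each absorbed set is proper and upward-closed in the current (partially collapsed) quotient. Termination provides this, since $[q_\bot]$ is never absorbed. This order-independence is what lets the absorptions compose.
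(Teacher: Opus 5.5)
Your overall route is the paper's: a structural induction that assembles the per-constructor lemmas. You are also right that open recursion bodies need an explicit invariant. But the invariant you adopt is exactly hypotheses~(i)--(ii) of \cref{lem:recursion}, and it is false for subterms of well-formed types, so the step ``placeholders in the children are carried along unchanged'' breaks.

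Take the paper's own SMTP type (\cref{fig:smtp}), which is well-formed, with body $S' = \sbranch{\mathit{mail} : \sbranch{\mathit{send} : X},\; \mathit{quit} : \End}$. Once $p_X$ is deleted, the $\mathit{send}$-state $q_1$ has no successors. So $[q_1]$ and $[q_\bot]$ are two incomparable minimal classes, the placeholder-free quotient has no least element, and neither \cref{lem:choice} nor \cref{lem:absorption} can be invoked. One level lower, in $\sbranch{\mathit{send} : X}$, the terminal added by the branch clause is not reachable from the top, so your claim~(i) fails too. In general, every choice state whose continuations are all variables becomes a spurious sink. The paper's one-line induction never discharges hypothesis~(ii) either, so citing \cref{lem:recursion} does not close the gap.

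The missing idea is to treat placeholders as exits rather than delete them. For each subterm $S'$, require that $\statespace{S'}$ with every placeholder merged into its terminal state (equivalently $\statespace{S'[\End/\vec{X}]}$) has a bounded-lattice SCC quotient $L$, with $q_\top$ reaching every state.
\begin{itemize}
\item This invariant is preserved by \cref{lem:choice}, since each child is now genuinely bounded with its bottom the glued terminal. It is also preserved by the parallel case, whose components are closed.
\item It inherits Termination from the closed type: any path leaving the region of $S'$ must first take a placeholder edge, which now leads to $q_\bot$.
\item At $\srec{X}{S'}$, let $D$ be the classes that can reach an $X$-edge. $D$ is upward-closed and contains $[q_\top]$ whenever $X$ occurs. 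Paths from states outside $D$ never use an $X$-edge, so rerouting those edges from $q_\bot$ to $q_\top$ leaves the order on $L \setminus D$ unchanged while collapsing $D$ onto $[q_\top]$.
\item What could still fail is that $q_\top$ reached $q_\bot$ only through $X$-edges, as in $\srec{X}{\sbranch{a : X}}$. Termination of $\srec{X}{S'}$ excludes exactly this, and that is its real role. It is not about properness of $D$: $q_\bot$ is a sink and can never be absorbed anyway.
\end{itemize}
With these facts in place, \cref{lem:absorption} finishes the recursion case.

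On named definitions, your first attempt cannot succeed in general. With two names each referenced from both arms of a choice, the arms have two incomparable maximal common lower bounds (the two entries), so no meet exists. The fallback restriction is therefore forced, and it is how the paper itself scopes the theorem.
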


\begin{proof}
By structural induction on $S$, applying the per-constructor
\cref{lem:end,lem:choice,lem:recursion,lem:parallel}.
Each case preserves the bounded-lattice property, completing
the induction.
\end{proof}

\camrev{The theorem is stated for session types; for declarations with
\emph{named equations} the construction yields a bounded \emph{partial}
lattice in general. When two equations are each referenced from two distinct
choice branches, the branches reconverge and a meet or join may fail to
exist; recursive sharing is
absorbed by the SCC quotient and stays a lattice. Characterising the sharing
patterns that keep the operations total is open (\cref{sec:conclusion}).}

\begin{corollary}[Termination tightness]\label{cor:wfpar}
The \textbf{Termination} clause is tight: relaxing the
exit-path condition yields types whose SCC-quotiented state
spaces are not lattices.
\end{corollary}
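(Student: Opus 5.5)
The plan is to prove tightness by exhibiting explicit counterexamples: types that satisfy Closedness, Contractiveness and Parallel closedness but violate only the Termination clause, and whose SCC-quotiented state spaces fail to be lattices. Since the Termination clause is used in \cref{prop:extrema} (to make $q_\bot$ the least element) and in \cref{lem:recursion} (to make the absorbed set $D$ proper), the natural failure is the loss of a least element. A single witness per failure mode suffices.

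\textbf{Step 1 (minimal witness).} I would take $S_0 = \srec{X}{\sbranch{a : X}}$. Its state space under \cref{def:construction} has the single state $q_\top$ with a self-loop $a$, and the terminal state $q_\bot$ of the body is never reachable. This example is degenerate, because its quotient is a one-element lattice. So I will move to a type where the terminal state survives as a separate, unreachable component, or where two non-terminating classes are incomparable.

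\textbf{Step 2 (two incomparable sinks).} Take
\[
S_1 = \sbranch{l : \srec{X}{\sbranch{a : X}},\; r : \srec{Y}{\sbranch{b : Y}}}.
\]
From $q_\top$ the transitions $l$ and $r$ lead to two distinct self-looping states $s_X$ and $s_Y$. Neither of them reaches the other, and neither reaches any exit, so Termination fails. The argument then runs as follows.
\begin{enumerate}
  \item Compute the SCCs: $\{q_\top\}$, $\{s_X\}$, $\{s_Y\}$, and the shared $q_\bot$ if the construction keeps it as a state. No state reaches $q_\bot$.
  \item Read off the order from \cref{def:scc-quotient}: $[s_X]$ and $[s_Y]$ are both minimal.
  \item Conclude that $[s_X]\wedge[s_Y]$ does not exist, since they have no common lower bound. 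Hence the quotient has no least element and is not a bounded lattice.
\end{enumerate}
If the construction retains an isolated $q_\bot$, the failure is even more direct: $q_\bot$ is incomparable with $q_\top$, so no greatest element exists either.

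\textbf{Step 3 (partial relaxation).} To argue that the clause is tight and not merely sufficient, I would show that weakening it to ``some state reaches $\End$'' still fails. For this I take
\[
\sbranch{l : \End,\; r : \srec{Y}{\sbranch{b : Y}}}.
\]
Here $q_\top$ reaches $\End$, but $[s_Y]$ and $[q_\bot]$ are incomparable minimal elements, so their meet is missing.

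\textbf{Main obstacle.} The subtle point is fixing exactly what $\mathcal{L}$ produces for a recursion whose body never mentions $\End$ on the cycle. I will resolve this by computing the states concretely from cases (2) and (4) of \cref{def:construction}, and by checking that the conclusion (two incomparable minimal classes) holds whether or not the unreachable $q_\bot$ is kept.
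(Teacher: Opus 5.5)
Your proposal is correct and is essentially the paper's argument: your Step~3 type $\sbranch{l : \End,\; r : \srec{Y}{\sbranch{b : Y}}}$ is, up to label renaming, exactly the paper's witness, and the paper likewise concludes that the self-looping class and $[q_\bot]$ leave no least element (your Step~2 type is a further valid instance). One minor slip does not affect the result: the branch clause always puts $q_\bot$ into $Q$, so under the literal construction your Step~1 type $\srec{X}{\sbranch{a : X}}$ yields the two-element antichain $\{[q_\top],[q_\bot]\}$, which is itself not a lattice, and it reduces to a one-element lattice only if unreachable states are pruned.
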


A witness:
$\sbranch{a : \End,\; b : \srec{X}{\sbranch{a : X}}}$
violates \textbf{Termination}\,---\,from the body root
$\sbranch{a : X}$, the only outgoing transition unfolds $X$
back to the same root, so no path to $\End$ exists. After SCC
quotient the recursive branch collapses to a single class with
no outgoing transition to a terminal, leaving no bottom element
reachable from all states.

\begin{proposition}[A realisable fragment]%
\label{prop:realisability}
\camrev{Every finite chain is a reticulate, and reticulates are closed
under finite products,
$\statespace{S_1 \spar \cdots \spar S_n}/{\equiv} \cong
\prod_i \statespace{S_i}/{\equiv}$. Hence every finite product of chains
is a reticulate.}
\end{proposition}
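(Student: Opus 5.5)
The proof has two parts: every finite chain is realised by a session type, and the isomorphism for $\spar$ is exactly the product clause already proved. The corollary then follows by combining them.

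\emph{Chains.} For a chain with $k+1$ elements, $k \geq 0$, take the type
\[
  C_k = \sbranch{a_1 : \sbranch{a_2 : \cdots \sbranch{a_k : \End}\cdots}},
\]
with $C_0 = \End$. It is closed and has no recursion, so contractiveness and parallel closedness hold vacuously. Termination holds because every state lies on the unique path to $\End$. I would show that $\statespace{C_k}$ is a path with $k+1$ states, by induction on $k$. The base case $k=0$ is the one-state clause of \cref{def:construction}. For the step, the unary branch clause adds one fresh $q_\top$ with a single transition to the entry of $\statespace{C_{k-1}}$, and shares that child's bottom. The graph is acyclic, so every SCC is a singleton, and reachability along a path is a total order. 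Hence $\statespace{C_k}/{\equiv}$ is the $(k+1)$-element chain. This also agrees with the $n=1$ remark under \cref{fig:diamond}: the fresh top simply prepends one element. I would not use $\sbranch{}$ as the base, because it contributes an extra $\tau$-step. The only care needed is the off-by-one bookkeeping in this induction.

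\emph{Closure under products.} Given reticulates $L_i \cong \statespace{S_i}/{\equiv}$, I would choose the $S_i$ with disjoint label sets by renaming labels, which leaves the quotient unchanged. Their composition $S_1 \spar \cdots \spar S_n$ is then well-formed:
\begin{itemize}
  \item each $S_i$ is closed, so parallel closedness holds;
  \item termination holds componentwise, because each component can independently run to its own $q_\bot^i$.
\end{itemize}
The isomorphism $\statespace{S_1 \spar \cdots \spar S_n}/{\equiv} \cong \prod_i \statespace{S_i}/{\equiv}$ is the one asserted in the proof of \cref{lem:parallel}, which I take as given.

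If I had to justify that isomorphism directly, this is where the real work lies. Since transitions advance one component at a time, $(s_i)_i \reach (t_i)_i$ holds iff $s_i \reach t_i$ in each component. Concatenating component paths proves one direction. For the other, project a product path onto each coordinate. Mutual reachability is therefore also componentwise, so SCC classes of the product are exactly tuples of classes, and the induced order is the product order.

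Finally, iterating the two parts shows that any finite product of chains $\prod_j \mathbf{k}_j$ is realised by $C_{k_1-1} \spar \cdots \spar C_{k_m-1}$, with labels chosen disjoint. For $m=1$ it is realised by the chain type itself, since $\spar$ requires $n \geq 2$.
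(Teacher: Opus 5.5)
Your proposal is correct and follows the paper's proof: $k$ nested single-arm branches over $\End$ realise the $(k{+}1)$-element chain (the $n=1$ case of \cref{lem:choice}), and closure under finite products is \cref{lem:parallel}. Your additional checks fill in details the paper leaves implicit: well-formedness of the composite, avoiding $\sbranch{}$ as the base case, and the componentwise-reachability argument that SCC classes of the product are tuples of component classes. Choosing disjoint labels is harmless but unnecessary, since $\delta_\times$ interleaves component steps rather than synchronising on shared labels.
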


\begin{proof}
\camrev{A single-arm branch prepends a fresh top above its continuation
(\cref{lem:choice} at $n = 1$), so $k$ nested single-arm branches over
$\statespace{\End}$ realise the $(k{+}1)$-element chain; closure under
finite products is \cref{lem:parallel}.}
\end{proof}

\section{Duality and Subtyping as Lattice Consequences}\label{sec:consequences}

The lattice structure has two consequences for class session types:
duality is invisible at the lattice level (\cref{sec:duality}); width
subtyping reduces to lattice embedding (\cref{sec:subtyping}).
On \emph{channel} session types, duality is operational, swapping
$\&/\oplus$ together with $?/!$~\cite{bica2010}; for \emph{class}
session types, the residual polarity swap preserves lattice
structure (\cref{prop:duality}).

\subsection{Duality Preserves Lattice Isomorphism}\label{sec:duality}
The \emph{dual} of a session type swaps branch and selection:
$\mathit{dual}(\sbranch{m_i : S_i}) = \sselect{m_i :
\mathit{dual}(S_i)}$ and vice versa; all other constructors
($\End$, $\spar$, $\srec{X}{\cdot}$) are covariant.
Duality is an involution: $\mathit{dual}(\mathit{dual}(S)) = S$.

\begin{proposition}[Duality isomorphism]\label{prop:duality}
For every well-formed session type $S$, there is a
bounded-lattice isomorphism
\[
\statespace{S}/{\equiv} \;\cong\;
\statespace{\mathit{dual}(S)}/{\equiv}.
\]
\end{proposition}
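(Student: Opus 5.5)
We will prove something stronger than the statement: the state spaces themselves are identical, $\statespace{S} = \statespace{\mathit{dual}(S)}$, as labelled transition systems with the same designated initial and terminal states, up to the choice of fresh state names. The bounded-lattice isomorphism then comes for free. Equal state spaces have the same reachability preorder (\cref{def:reach}), hence the same SCCs and the same quotient order (\cref{def:scc-quotient}). By \cref{thm:reticulate} both quotients are bounded lattices. An order isomorphism between lattices automatically preserves meets, joins, $\top$ and $\bot$, since these are defined purely from the order. So the identity on classes is the required isomorphism.

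The core is a structural induction on $S$, following the cases of \cref{def:construction}. We carry the induction over open terms, with placeholder states for free variables, so that the recursion and named-definition cases have an induction hypothesis to use.
\begin{itemize}
\item For $\End$ and for a variable $X$, $\mathit{dual}$ is the identity, so the claim is immediate.
\item For a branch $\sbranch{m_i:S_i}$, the induction hypothesis gives $\statespace{S_i} = \statespace{\mathit{dual}(S_i)}$. Applying the branch clause to equal components gives equal results. By \cref{lem:polarity}, the branch clause over $\mathit{dual}(S_i)$ equals the selection clause over the same continuations. That selection is exactly $\mathit{dual}(\sbranch{m_i:S_i})$. Selection is symmetric.
\item For parallel composition, $\mathit{dual}$ acts componentwise and the product construction depends only on the component state spaces, so equality is preserved.
\item For $\srec{X}{S'}$, $\mathit{dual}$ commutes with the binder. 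The induction hypothesis gives equal body state spaces, including the same placeholder set $P_X$. The redirection $\delta'[P_X := q_\top']$ is then the same operation on both sides.
\item Named declarations are handled the same way, equation by equation: dualise every right-hand side and apply the global redirection.
\end{itemize}

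We also check that $\mathit{dual}$ preserves well-formedness, so that \cref{thm:reticulate} applies to $\mathit{dual}(S)$. Closedness, contractiveness and parallel closedness are syntactic conditions that ignore polarity. Termination is a property of the state space, and the state space has just been shown to be unchanged.

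The main obstacle is bookkeeping rather than mathematics. The construction picks fresh states and placeholders, so the equality really holds up to a bijection of state names that preserves transitions, labels and both designated states. We would make the induction hypothesis assert exactly such a label-preserving bijection, compatible with placeholders. Each constructor clause is then shown to extend the component bijections, for example by disjoint union plus the fresh $q_0$ and the shared $q_\bot$ in the choice case. Any such bijection induces an isomorphism of the SCC quotients, which completes the proof.
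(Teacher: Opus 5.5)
Your argument is correct, but it runs the induction on a stronger invariant than the paper's. The paper inducts on the proposition itself, composing lattice isomorphisms constructor by constructor. At a choice it glues the $n$ isomorphisms $\statespace{S_i}/{\equiv} \cong \statespace{\mathit{dual}(S_i)}/{\equiv}$ together with the identity on $q_\top$ and $q_\bot$. At $\srec{X}{S'}$ it observes that ``the same back-edges are created'' and concludes ``by induction on the body''. You instead show that the labelled state spaces themselves coincide up to a renaming of fresh states. You carry this through open terms together with their placeholders, and pass to the SCC quotient only once, at the end.

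Your route buys rigour exactly at the recursion and named-definition cases. The body $S'$ is open, so the proposition as stated does not apply to it. Moreover, an abstract isomorphism between quotients says nothing about which states reach the placeholders, and hence nothing about how redirecting them to $q_\top'$ reshapes the SCCs. The paper's appeal to ``the same back-edges'' is therefore really an appeal to your state-level invariant.

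Your route also gives a sharper conclusion: the two reticulates are literally equal. This matches the paper's own later remark that \cref{prop:duality} reflects polarity erasure rather than a structural symmetry. You also make explicit that $\mathit{dual}(S)$ is well-formed. Strictly, this is needed only to place $\mathit{dual}(S)$ in the domain of \cref{def:construction}: once the quotients coincide, the bounded-lattice property transfers from $S$ by \cref{thm:reticulate}.

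The paper's formulation, for its part, mirrors the per-constructor architecture of \cref{thm:reticulate}. In principle it would also survive a construction in which branch and selection produced different state spaces with isomorphic quotients. As written, however, it still needs state-level information at the recursion case.
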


\begin{proof}
By structural induction on $S$. The base case ($\End$) is
trivial since duality is the identity. For choice, branch and
selection produce identical state spaces (\cref{lem:polarity}),
so polarity reversal is invisible at the lattice level. For
parallel, recursion, and named definitions, duality distributes
through the constructors, and the inductive isomorphisms on
sub-types compose into one on the whole.
\end{proof}

\Cref{prop:duality} says that polarity is invisible at the lattice
level: $S$ and $\mathit{dual}(S)$ produce the same lattice
structure.
For class session types, only the object's protocol $S$ is
explicit; the dual $\mathit{dual}(S)$ specifies the implicit
second-party (client) protocol, and these two faces share the
same lattice structure.
Both tools verify $\statespace{S}/{\equiv} \cong
\statespace{\mathit{dual}(S)}/{\equiv}$ for all 108~benchmarks.
The Lean~4 mechanisation proves $\mathit{dual}(\mathit{dual}(S))
= S$ on the AST; the state-space isomorphism follows from the
constructor-wise argument above.

\camrev{The construction erases polarity (\cref{lem:polarity}), so the
reticulate does not distinguish $S$ from $\mathit{dual}(S)$;
\cref{prop:duality} thus reflects this erasure rather than a structural
symmetry. Recovering polarity\,---\,by labelling the cover
relation\,---\,would make duality a non-identity involution and realise the
linear-logic reading ($\&$ as meet $\sqcap$, $\oplus$ as join $\sqcup$); this
is left to future work on labelled reticulates.}

\subsection{Subtyping as Lattice Embedding}\label{sec:subtyping}
The lattice structure has a concrete consequence for subtype
checking. Gay and Hole~\cite{gay2005subtyping} define a coinductive
subtyping relation $\leq_{GH}$ on session types as the greatest
fixed point of a monotone operator on type-indexed binary
relations. In words, $S_1 \leq_{GH} S_2$ means the subtype $S_1$
can be substituted wherever a $S_2$-typed object is expected\,---\,a
branch subtype offers \emph{more} methods (covariant width); a
selection subtype emits \emph{fewer} labels (contravariant width);
parallel is congruent; recursion is closed under one-step
unfolding. We refer to~\cite{gay2005subtyping} for the formal
inference rules.

The contravariance of selection width is the \emph{dual} of
branch covariance: by \cref{prop:duality}, swapping polarity
preserves the lattice, so the selection case follows from the
branch case applied to the dual types. Since branch and selection
produce \emph{identical state spaces} (\cref{sec:statespace}),
the embedding relationship depends on which type has more
top-level choices\,---\,not on the polarity.

\begin{definition}[Lattice embedding]\label{def:embedding}
An \emph{embedding} of a bounded lattice $L_1$ into $L_2$ is an
injective map $\varphi : L_1 \to L_2$ that preserves and reflects
order: $a \leq b$ iff $\varphi(a) \leq \varphi(b)$.
Every embedding preserves meets and joins.
\end{definition}

\camrev{On the bare order this embedding is \emph{sound} for $\leq_{GH}$ but
not faithful: types differing only in terminal-method
width\,---\,e.g.\ $\sbranch{a : \End}$ and $\sbranch{a : \End,\; b :
\End}$\,---\,share one lattice, because \emph{end}-identification sends
both methods to the common $q_\bot$. A faithful invariant needs the labelled
transitions~$\delta$: subtyping is a label-preserving simulation, of which
the order embedding is the SCC-quotient projection. The labelled-lattice /
bisimulation correspondence is developed separately.}

\begin{proposition}[Width--embedding correspondence]%
\label{prop:embedding}
For all well-formed non-recursive session types related by
\emph{width} subtyping (shared-label continuations coincide,
$S_i = S_i'$ for $i \le k$):
\begin{enumerate}
  \item \textbf{Branch width.} If
    $\sbranch{m_1{:}S_1, \ldots, m_n{:}S_n} \leq_{GH}
    \sbranch{m_1{:}S_1', \ldots, m_k{:}S_k'}$ with $k \leq n$,
    then the supertype embeds into the subtype:
    \[
      \statespace{\sbranch{m_1{:}S_1', \ldots, m_k{:}S_k'}}/{\equiv}
      \;\hookrightarrow\;
      \statespace{\sbranch{m_1{:}S_1, \ldots, m_n{:}S_n}}/{\equiv}.
    \]
  \item \textbf{Selection width.} If
    $\sselect{l_1{:}S_1, \ldots, l_n{:}S_n} \leq_{GH}
    \sselect{l_1{:}S_1', \ldots, l_k{:}S_k'}$ with $n \leq k$,
    then the subtype embeds into the supertype:
    \[
      \statespace{\sselect{l_1{:}S_1, \ldots, l_n{:}S_n}}/{\equiv}
      \;\hookrightarrow\;
      \statespace{\sselect{l_1{:}S_1', \ldots, l_k{:}S_k'}}/{\equiv}.
    \]
\end{enumerate}
\end{proposition}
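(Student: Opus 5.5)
The plan is to build the embedding explicitly from the choice construction (\cref{def:construction}, case~2) and then obtain the selection case from the branch case by duality, as the paper announces.

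\emph{Branch width.} Let $T = \sbranch{m_1{:}S_1, \ldots, m_n{:}S_n}$ and $T' = \sbranch{m_1{:}S_1', \ldots, m_k{:}S_k'}$ with $k \le n$ and $S_i = S_i'$ for $i \le k$.

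First, describe the carrier of $\statespace{T}/{\equiv}$. Since $T$ is non-recursive, no back-edges arise, so the SCC quotient is trivial: every class is a singleton and the order is plain reachability. Hence the carrier is
\[
\{q_\top\} \;\cup\; \textstyle\bigcup_{i \le n}\bigl(Q_i \setminus \{q_\bot^i\}\bigr) \;\cup\; \{q_\bot\}.
\]

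Second, describe the carrier of $\statespace{T'}/{\equiv}$. It is the analogous union over $i \le k$ only. Because $S_i = S_i'$, the arms $i \le k$ are literally the same sub-state-spaces.

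Third, define $\varphi$ as follows:
\begin{itemize}
\item $\varphi(q'_\top) = q_\top$;
\item $\varphi(q'_\bot) = q_\bot$;
\item for $i \le k$, $\varphi$ is the identity on the states of arm~$i$.
\end{itemize}
Injectivity is immediate, because the arms are disjoint above the shared $q_\bot$. For the case $k = 0$ the paper's $n = 0$ convention applies: $T'$ is a two-element chain, which maps to $\{q_\top, q_\bot\}$.

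Fourth, prove that $\varphi$ preserves and reflects order, i.e.\ $a \reach b$ in $T'$ iff $\varphi(a) \reach \varphi(b)$ in $T$. I check this pair by pair.
\begin{itemize}
\item \emph{Both $a$ and $b$ in arm~$i$.} Any path between them in $T$ stays inside arm~$i$, since arms only meet at $q_\top$ and $q_\bot$, there is no back-edge into $q_\top$, and $q_\bot$ has no outgoing transitions. So reachability is that of $\statespace{S_i}$ with $q_\bot^i$ renamed to $q_\bot$, which is the same in $T$ and $T'$.
\item \emph{$a$ and $b$ in different arms.} They are incomparable on both sides.
\item \emph{Pairs involving an extremum.} These are handled by \cref{prop:extrema} on both sides.
\end{itemize}
Combined with \cref{def:embedding}, this gives the embedding $\statespace{T'}/{\equiv} \hookrightarrow \statespace{T}/{\equiv}$.

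I should also confirm that meets and joins are preserved, since the paper's definition asserts this. Across arms, the join is $q_\top$ and the meet is $q_\bot$ in both lattices, by the proof of \cref{lem:choice}. Within an arm, meets and joins are inherited from the common sub-lattice. This confirms that $\varphi$ is a genuine lattice embedding, not merely an order embedding.

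\emph{Selection width.} By \cref{lem:polarity}, $\statespace{\sselect{l_i{:}S_i}_{i\le n}} = \statespace{\sbranch{l_i{:}S_i}_{i\le n}}$, and likewise for the $k$-ary type. So the selection statement with $n \le k$ is exactly the branch construction above with the roles of the two types exchanged: the $n$ arms of the smaller selection are sent identically into the $k$ arms of the larger one. Phrased through duality, apply part~1 to the dual branch types, then transport along the isomorphism of \cref{prop:duality}. This is sound because $\mathit{dual}$ is an involution and the isomorphism is the identity on carriers.

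\emph{Main obstacle.} I expect the hardest step to be the arm-locality claim inside the order check, i.e.\ that reachability in $T$ between two states of arm~$i$ does not detour through another arm. This relies on two facts: non-recursiveness (no edges back to $q_\top$) and $q_\bot$ being a sink. I would verify the latter by a one-line induction on \cref{def:construction}, noting that $\End$ has no transitions and that the choice and parallel cases preserve this property for the terminal state. A secondary concern is that $S_i$ itself may contain named definitions, in which case \cref{lem:equations} may only yield a partial lattice. I would restrict to the grammar without equations, or simply note that the order-embedding argument never uses totality of meets.
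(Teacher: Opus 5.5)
Your proposal is correct and follows essentially the same route as the paper. Non-recursiveness makes the SCC quotient trivial, the embedding is the inclusion of the shared arms with $q_\top$ and $q_\bot$ fixed (arms being disjoint above $q_\bot$), and the selection case follows from the branch case via polarity erasure/duality; you only spell out the arm-locality, extremum and $k=0$ cases that the paper leaves implicit.

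Your explicit check that this particular map preserves meets and joins is a genuine improvement. The blanket claim in \cref{def:embedding}, that every order embedding preserves meets and joins, fails for general lattices.
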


\begin{proof}
Without recursion the state space is acyclic, so the SCC quotient
is the identity and no states merge. For width subtyping the shared
continuations coincide, so the supertype's reticulate is exactly the
subtype's restricted to the shared labels; the extra labels contribute
fresh sub-lattices disjoint above $q_\bot$ (\cref{lem:choice}). This
inclusion is injective, order-preserving and order-reflecting.
Selection-width follows by duality (\cref{prop:duality}).
\end{proof}

The two clauses share one principle: the side with fewer
top-level choices has the smaller lattice and embeds into the
larger. The flip in direction\,---\,supertype-into-subtype for
$\&$, the reverse for $\oplus$\,---\,is not a property of the
lattice (which by \cref{lem:polarity} ignores polarity) but of
$\leq_{GH}$, under which a branch subtype offers more labels and a
selection subtype fewer.

\Cref{prop:embedding} isolates \emph{width} subtyping; the full
relation $\leq_{GH}$ also allows \emph{depth} subtyping
($S_i \leq_{GH} S_i'$ on shared labels). Each such continuation
contributes its own embedding, so $\leq_{GH}$ corresponds to a
\emph{family} of embeddings rather than a single map; a complete
lattice-theoretic characterisation, including recursion, is
ongoing work. Both tools implement the embedding check, and
\cref{sec:casestudy} demonstrates it on a concrete protocol.


\section{Case Study: A Concurrent File Protocol}\label{sec:casestudy}

We demonstrate \cref{thm:reticulate}, \cref{prop:duality}, and
\cref{prop:embedding} on a sequential file reader, then recover
its writer counterpart and compose the two in parallel into the
canonical concurrent read/write File.

\paragraph{Construction.}\label{sec:casestudy:construction}
The FileReader protocol\,---\,a single thread reading from open
to close, with each read returning \textit{data} until
\textit{eof}\,---\,is
\[
S_{\mathit{FileReader}} = \sbranch{\mathit{open} :
  \srec{X}{\sbranch{\mathit{read} :
  \sselect{\mathit{data} : X,\; \mathit{eof} :
  \mathit{Close}}}}}, \quad
\mathit{Close} = \sbranch{\mathit{close} : \End}.
\]
The bottom-up construction matches subterms to constructor lemmas:

\begin{center}\small
\begin{tabular}{@{}ll@{}}
\toprule
Subterm & Lemma \\
\midrule
$\End$                                                     & \cref{lem:end} \\
$\mathit{Close} = \sbranch{\mathit{close} : \End}$         & \cref{lem:choice}, $n{=}1$; \cref{lem:equations} \\
$\sselect{\mathit{data} : X,\; \mathit{eof} : \mathit{Close}}$ & \cref{lem:choice}, $n{=}2$ \\
$\sbranch{\mathit{read} : \ldots}$                         & \cref{lem:choice}, $n{=}1$ \\
$\srec{X}{\ldots}$                                         & \cref{lem:recursion}, \cref{lem:absorption} \\
$\sbranch{\mathit{open} : \ldots}$                         & \cref{lem:choice}, $n{=}1$ \\
\bottomrule
\end{tabular}
\end{center}

\noindent
The recursion step makes $q_1, q_2$ mutually reachable, so
\cref{lem:absorption} collapses them into the class $[q_1, q_2]$:
the SCC captures the \emph{read/data loop}, the protocol cycling
between read and selection indefinitely before choosing
\textit{eof}.
By \cref{thm:reticulate}, the composition witnesses that
$\statespace{S_{\mathit{FileReader}}}/{\equiv}$ is the bounded lattice
$q_0 > [q_1,q_2] > q_3 > q_\bot$ shown in \cref{fig:fileobject}(b).

\begin{figure}[t]
\centering
\begin{tikzpicture}[every node/.style={font=\small}]
  \node[font=\footnotesize\bfseries] at (3.1,1.2) {(a) Before quotient};
  \node[state top] (q0) at (0.3,0)  {$q_0$};
  \node[state]     (q1) at (1.7,0)  {$q_1$};
  \node[state]     (q2) at (3.1,0)  {$q_2$};
  \node[state]     (q3) at (4.5,0)  {$q_3$};
  \node[state bot] (qb) at (5.9,0)  {$q_\bot$};

  \draw[trans] (q0) -- node[above] {\textit{open}}  (q1);
  \draw[trans] (q1) -- node[above] {\textit{read}}  (q2);
  \draw[trans] (q2) -- node[above] {\textit{eof}}   (q3);
  \draw[trans] (q3) -- node[above] {\textit{close}} (qb);
  \draw[trans, dashed, gray] (q2) to[bend left=45] node[above,
    font=\scriptsize, text=gray] {\textit{data}} (q1);

  \begin{scope}[on background layer]
    \node[scc, fit=(q1)(q2), label={[font=\scriptsize,text=gray]below:SCC}] {};
  \end{scope}

  \node[font=\footnotesize\bfseries] at (3.1,-1.2) {(b) After SCC quotient};
  \node[state top] (q0b) at (0.3,-2.0) {$q_0$};
  \node[state]     (scc) at (2.4,-2.0) {$[q_1,q_2]$};
  \node[state]     (q3b) at (4.5,-2.0) {$q_3$};
  \node[state bot] (qbb) at (5.9,-2.0) {$q_\bot$};

  \draw[trans] (q0b) -- node[above] {\textit{open}}  (scc);
  \draw[trans] (scc) -- node[above] {\textit{eof}}   (q3b);
  \draw[trans] (q3b) -- node[above] {\textit{close}} (qbb);
\end{tikzpicture}
\caption{%
  $\statespace{S_{\mathit{FileReader}}}$:
  (a)~before quotient, with the recursion edge $q_2 \to q_1$ dashed;
  (b)~after the SCC quotient $[q_1,q_2]$, a 4-element chain.}
\label{fig:fileobject}
\end{figure}

\paragraph{Duality.}\label{sec:casestudy:duality}
The dual swaps every choice polarity:
\[
\mathit{dual}(S_{\mathit{FileReader}}) = \sselect{\mathit{open} :
  \srec{X}{\sselect{\mathit{read} :
  \sbranch{\mathit{data} : X,\; \mathit{eof} :
  \sselect{\mathit{close} : \End}}}}}.
\]
Each choice flips agency: methods that $S_{\mathit{FileReader}}$
\emph{offers} are \emph{invoked} by
$\mathit{dual}(S_{\mathit{FileReader}})$, and outcomes the file
\emph{internally selects} are \emph{received} by the dual\,---\,the
same protocol read from the opposite endpoint.
By \cref{lem:polarity}, replaying \cref{def:construction} yields
the same Hasse diagram (\cref{fig:fileobject}(b)), witnessing
\cref{prop:duality}.

\paragraph{Subtyping.}\label{sec:casestudy:subtyping}
Consider a richer file protocol that also lets a client read
metadata and close without entering the read loop:
\[
S_{\mathit{FileReader}}^{+} = \sbranch{\mathit{open} :
  \srec{X}{\sbranch{\mathit{read} :
  \sselect{\mathit{data} : X,\; \mathit{eof} :
  \sbranch{\mathit{close} : \End}}}},\;
  \mathit{stat} : \sbranch{\mathit{close} : \End}}
\]
By branch covariance (Gay--Hole width subtyping),
$S_{\mathit{FileReader}}^{+} \leq_{GH} S_{\mathit{FileReader}}$: the
subtype offers \emph{more} top-level methods.
Applying \cref{def:construction}, the outer branch widens the top
$q_0$ with a second transition $q_0 \xrightarrow{\mathit{stat}} q_4$,
where $q_4 = \sbranch{\mathit{close}:\End}$ is a fresh state reached
\emph{outside} the read loop; the shared \textit{open} arm
reproduces Steps~1--6 unchanged. After SCC quotient the subtype's
reticulate is the five-element lattice
\[
q_0 > [q_1,q_2] > q_3 > q_\bot, \qquad q_0 > q_4 > q_\bot,
\]
with top $q_0$ and the two arms meeting at $q_\bot$ ($q_3, q_4$
incomparable). The supertype's 4-chain embeds into it as the
\textit{open} arm, $q_4$ lying outside the image\,---\,a strict,
order-preserving and -reflecting embedding. This type is recursive,
so it lies outside \cref{prop:embedding}; here the recursion sits
in the shared \textit{open} continuation, so the embedding survives.

\paragraph{FileWriter.}\label{sec:casestudy:filewriter}
The symmetric writer arises by renaming
\textit{read}/\textit{data} to \textit{write}/\textit{ack}:
\[
S_{\mathit{FileWriter}} = \sbranch{\mathit{open} :
  \srec{X}{\sbranch{\mathit{write} :
  \sselect{\mathit{ack} : X,\; \mathit{eof} :
  \sbranch{\mathit{close} : \End}}}}}.
\]
The construction is invariant under method renaming, so
$\statespace{S_{\mathit{FileWriter}}} \cong
\statespace{S_{\mathit{FileReader}}}$.

\paragraph{File.}\label{sec:casestudy:file}
The canonical concurrent read/write File bundles both inside one
object:
$S_{\mathit{File}} \triangleq S_{\mathit{FileReader}} \parallel
S_{\mathit{FileWriter}}$. Two threads sharing only a filename
each instantiate one arm independently, so two opens and two
closes on the same logical file are not errors but transitions
on disjoint arms. The reticulate is the componentwise product
$\statespace{S_{\mathit{FileReader}}} \times
\statespace{S_{\mathit{FileWriter}}}$ (\cref{lem:parallel}).

\section{Implementation, Mechanisation, and Empirical Validation}
\label{sec:impl-eval}

\paragraph{Implementation.}
We have implemented the state-space construction and lattice
checking in two independently developed tools targeting distinct
deployment surfaces: a Python~3.12 CLI for standalone protocol
analysis, and a Java~21 annotation processor
that performs the check at compile time on \texttt{@Session}-annotated
code. Both are thoroughly tested and agree on all 108~benchmarks.
The pipeline is identical in both tools: a parser converts the
session-type source (Java \texttt{@Session} annotations or text
input) to an AST; a builder applies \cref{def:construction}
clause-by-clause to produce the labelled state space; the SCC
quotient is computed via Tarjan's algorithm~\cite{tarjan1972scc};
and an all-pairs check verifies $\wedge$ and $\vee$ existence on
the quotient.
The lattice check is $O(n^2 \cdot m)$ where $n = |Q|$ and
$m = |\delta|$, completing in under 100\,ms on commodity hardware
for all benchmarks (a sub-millisecond median, and under 5\,ms for
the largest benchmark with $|Q| = 41$).

\paragraph{Test generation.}
A direct application of the lattice goes beyond verification:
automated protocol-conformance test generation. Both
implementations expose a test-generation mode that emits
JUnit~5 cases in three categories: valid paths (chains from
$q_\top$ to $q_\bot$, enumerated by bounded DFS), violations
(disabled methods at each reachable state), and incomplete
prefixes (proper prefixes of valid paths). The lattice's
boundedness guarantees a finite test suite; bottom-reachability
(\cref{prop:extrema}) guarantees every valid path terminates.
Non-terminating types are accepted under the same three-mode
flag as the lattice check; modes, default, and exact wording
are pinned identical across both tools by mirrored parity test
suites.
Valid paths are then empty, but violations and incomplete
prefixes remain. Both tools agree on the generated test sets
across the 108~benchmarks.

\paragraph{Mechanisation.}
The paper's results have been mechanically verified in
Lean~4~\cite{lean4} using Mathlib~\cite{mathlib4}.
\Cref{thm:reticulate} is mechanised as a universal theorem
\texttt{reticulate\_lattice} quantifying over every well-formed
session type, with no extra hypothesis;
duality's AST involution
$\mathit{dual}(\mathit{dual}(S)) = S$ and polarity swap are
mechanised, and \cref{prop:embedding} (width embedding) and the
forward direction of \cref{prop:realisability} are backed by
dedicated modules. The full paper-faithful artefact comprises
17~modules with zero \texttt{sorry} and zero \texttt{admit}, and
uses no coinduction. The mechanisation, the Python checker, the
benchmark corpus, and a per-result map to the Lean modules are
archived as a companion artifact~\cite{artifact}.

\paragraph{Empirical validation.}
The 108-protocol benchmark suite plays two distinct roles:
\emph{(i)}~it tests that the Python and Java implementations
\emph{conform} to the theorem on real protocols (tooling
conformance); and \emph{(ii)}~it characterises the theorem's
empirical scope\,---\,every protocol in the suite is well-formed and
produces a lattice, and the 86\%/14\% distributive split (below) gives
a first empirical picture of reticulate structure across
application domains. The 108~protocols
(\cref{tab:benchmarks}) draw from \emph{literature benchmarks}
(Java Iterator, SMTP, HTTP, JDBC, OAuth~2.0, Two-Buyer, and
others from the Mungo/StMungo/Scribble literature),
\emph{industrial protocols} (MCP, A2A, TLS, Saga Orchestrator,
WebSocket, gRPC), and \emph{synthetic benchmarks} systematically
covering edge cases (deep recursion, wide branching $n \geq 10$,
nested parallel, mixed polarity).

\begin{table}[!ht]
\centering
\footnotesize
\caption{Benchmark summary by source.}
\label{tab:benchmarks}
\begin{tabular}{lrrr}
\toprule
\textbf{Source} & \textbf{Protocols} & \textbf{Distributive} & \textbf{Non-dist.} \\
\midrule
Literature & 34 & 32 & 2 \\
Industrial & 26 & 18 & 8 \\
Synthetic  & 48 & 43 & 5 \\
\midrule
\textbf{Total} & \textbf{108} & \textbf{93} & \textbf{15} \\
\bottomrule
\end{tabular}
\end{table}

All 108~protocols produce bounded lattices under SCC quotient.
We classify each lattice in the Birkhoff
hierarchy~\cite{birkhoff1967lattice}: 93 (86\%) are
\emph{distributive} (no forbidden $N_5$ or $M_3$ sublattice)
and 15 (14\%) are non-distributive\,---\,13 containing $N_5$
(pentagon) and 3 containing $M_3$ (diamond); one protocol (Hex)
contains both.
All product lattices from parallel composition are distributive,
consistent with the lattice-theoretic result that the product of
distributive lattices is distributive.
This classification is mechanised in Lean~4 with a proof
(by \texttt{decide}) that the distributive law fails on $N_5$.

\begin{table}[t]
\centering
\footnotesize
\caption{The 15 non-distributive benchmarks.
  13 contain $N_5$ (pentagon); 3 contain $M_3$ (diamond).}
\label{tab:nondist}
\begin{tabular}{lrll}
\toprule
Protocol & $|Q|$ & Type & Cause \\
\midrule
TLS Handshake         & 13 & $N_5$ & branch-inside-selection (HELLO\_RETRY) \\
Two-Phase Commit      &  7 & $N_5$ & branch-selection nesting \\
Ki3 Onboarding        & 32 & $N_5$ & nested branch-selection \\
Ki3 CI/CD Pipeline    & 41 & $N_5$ & 4-stage branch-selection nesting \\
Quantum Measurement   &  9 & $N_5$ & crossed branch-selection \\
SSH Handshake         & 18 & $N_5$ & branch-selection nesting \\
Alternative Splicing  &  7 & $M_3$ & three branches sharing outcomes \\
QCD Gluon Exchange    &  6 & $M_3$ & three-way symmetric reconvergence \\
Hex (2$\times$2)      & 30 & both  & game-tree reconvergence + nesting \\
\bottomrule
\multicolumn{4}{l}{\footnotesize Plus 6 further benchmarks
  (biology, physics, security, games) with analogous patterns.}
\end{tabular}
\end{table}

Two structural patterns account for all 15 non-distributive cases:
$N_5$ from \emph{depth-unequal choice} (branches of a choice
constructor terminating at different depths yield a 3-chain plus
side-branch) and $M_3$ from \emph{wide choice} ($n \geq 3$
branches reconverging at $q_\bot$ yield three
pairwise-incomparable atoms; binary choice produces at most~$M_2$,
which is distributive).
Both patterns are polarity-symmetric (\cref{prop:duality}); the
empirical clustering of $N_5$ in polarity-mixed nestings and $M_3$
in wide branchings (\cref{tab:nondist}) reflects regularities of
real protocols, not structural requirements.
Characterising precisely which session types yield distributive
reticulates remains open.

We additionally enumerate all well-formed terminating session
types up to depth~3 with label sets $\{a, b\}$,
confirming that every such type produces a lattice\,---\,further
evidence for universality.

\section{Related Work}\label{sec:related}

\paragraph{Session types.}
Session types were introduced by Honda et
al.~\cite{honda1998} for binary communication and extended to
multiparty settings~\cite{honda2008multiparty};
Vasconcelos~\cite{vasconcelos2012sessions},
H{\"u}ttel et al.~\cite{huttel2016foundations}, and
Ancona et al.~\cite{ancona2016behavioral} provide foundational
treatments and comprehensive surveys.
Honda et~al.'s formulation is a process calculus with explicit
$?T$/$!T$ directionality and delegation; ours is an object-protocol
specification on a single shared object, with $\spar$ over $n$
participants on that object rather than between independent
processes.
Session types for objects were developed by
Dezani-Ciancaglini et al.~\cite{dezani2006sessions}
and Gay et al.~\cite{bica2010}, with tool
support from \mungo and \stmungo~\cite{mungo2016,stmungo2017}.
\mungo uses named definitions as the primary design
mechanism; our grammar adopts the same notation. By contrast,
our $\spar$ constructor has no counterpart in typestate tools.
Gay and Hole~\cite{gay2005subtyping} established subtyping for
session types (revisited as lattice embedding in
\cref{sec:subtyping}).
Scalas and Yoshida~\cite{scalas2019less} rebuild multiparty
session types without global types or projection, replacing them
with a behavioural safety invariant on typing contexts.
Across the work above, the focus is on the type algebra;
we study the algebraic structure of the \emph{state spaces}
induced by individual session types.

\paragraph{Intersection/union session types.}
Padovani~\cite{padovani2010intersection,padovani2009mirror} and
Acay and Pfenning~\cite{acay2017intersections} develop a
\emph{type-algebra} lattice where internal and external choice are
replaced by intersection $\wedge$ and union $\vee$, yielding a
semantic subtyping lattice $\langle 2^{\mathcal{P}},\subseteq\rangle$
of closed process-set denotations. Our lattice is orthogonal: it
lives \emph{inside} a single type, on its reachable state space.
Padovani~\cite{padovani2010intersection} (p.~80) conjectures
distributivity of his process-set lattice and leaves it open;
our 86\%/14\% distributive split on 108~benchmarks
(\cref{sec:impl-eval}) is a first empirical probe of the analogous
question on reachability lattices, with a formal bridge left to
future work.
Barbanera and de'Liguoro~\cite{barbanera2015subbehaviour} similarly
develop a sub-behaviour preorder on client/server session types via
LTS compliance, but prove duals-as-minima rather than a lattice.
Strom and Yemini's 1986 typestate proposal~\cite{strom1986typestate}
required typestates of each type to form a meet-semilattice\,---\,the
closest historical ancestor. But their lattice captures
abstract-interpretation dataflow, not protocol reachability; in
particular, our $\spar$ constructor is what forces completion
to a \emph{full} lattice.

\paragraph{Lattices in programming languages.}
Lattice theory underlies abstract
interpretation~\cite{cousot1977abstract} and is the standard
algebraic vocabulary for order in PL~\cite{birkhoff1967lattice,davey2002lattices}.
We show that \emph{behavioural} types produce lattices at the
state-space level, a distinct setting.

\paragraph{Linear logic and session types.}
Caires and Pfenning~\cite{caires2010session} established a
Curry--Howard correspondence between session types and
intuitionistic linear logic; Wadler's~\cite{wadler2012propositions}
classical version (``propositions as sessions'') links negation to
session-type duality, complementing our state-space duality
(\cref{prop:duality}). Toninho, Caires, and
Pfenning~\cite{toninho2013higher} extended the correspondence to
higher-order sessions; the lattice-theoretic perspective on
higher-order types remains open.

\paragraph{Subtyping.}
Gay and Hole~\cite{gay2005subtyping} defined subtyping for session
types; Dardha et al.~\cite{dardha2017revisited} revisited it in a
pi-calculus setting.
Padovani~\cite{padovani2016fair} extended subtyping to multiparty
types with fairness; his coinductive characterisation operates on
infinite trees, complementary to our finite quotient embedding.
Castagna and Frisch~\cite{castagna2005gentle} developed semantic
subtyping using set-theoretic models.
Our characterisation of subtyping as lattice embedding
(\cref{sec:subtyping}) provides an algebraic alternative to the
coinductive, fair, and set-theoretic approaches.

\paragraph{Multiparty session types.}
Honda, Yoshida, and Carbone~\cite{honda2008multiparty} introduced
multiparty session types (MPST), where a global type is projected
onto local types for each role.
Our Reticulate Theorem applies to local types obtained by
projection; whether projection itself is a lattice morphism
between global and local reticulates is future work.
Dagnino, Giannini, and Dezani-Ciancaglini~\cite{dagnino2023deconfined}
recently developed deconfined global types in a coinductive,
process-calculus setting proving progress and deadlock-freedom;
our setting is complementary\,---\,single object, finite-state
($\mu$-unfolded), specification-based\,---\,proving lattice structure
rather than progress.

\paragraph{Process algebra.}
Our $\|$ operates on a \emph{single shared object} rather than
independent processes~\cite{milner1989communication,sangiorgi2001pi},
requiring product lattice construction.
Kobayashi~\cite{kobayashi2006new} and Dardha and
Gay~\cite{dardha2018linear} use type systems for
deadlock-freedom; our lattices complement this by making all
interleavings explicit, guaranteeing that every concurrent path
reaches~$\bot$.

\section{Conclusion}\label{sec:conclusion}

We have proved that the state space of every well-formed session
type, quotiented by strongly connected components, forms a bounded
lattice (\cref{thm:reticulate}). The proof is constructive
and modular, with one lemma per constructor; the grammar is
parsimonious (six type constructs, three $n$-ary), and
well-formedness reduces to four clauses (\cref{def:wellformed}),
three standard in the session-type literature and one
(termination) specific to our setting.
We validated the theorem on 108~benchmark protocols using two
independently developed, thoroughly tested tools, and mechanically
verified the core proof in Lean~4 (17~modules, zero \texttt{sorry}).

This lattice structure\,---\,the \emph{reticulate}\,---\,opens several
directions.
The \emph{distributivity classification} (86\% distributive, 14\%
non-distributive) suggests finer algebraic structure to exploit,
such as canonical forms for the distributive fragment.
\camrev{\Cref{prop:realisability} answers a fragment of the converse, but the
reachability order does not always yield a lattice: where sharing reconverges
it descends to a bounded partial lattice (\cref{sec:reticulate-theorem}). Ordered
structure on session types has been observed on the subtyping
preorder~\cite{gay2005subtyping,barbanera2015subbehaviour} and on typestate
dataflow~\cite{strom1986typestate}, but a \emph{syntactic characterisation} of
which structure a type's reachability order carries\,---\,the natural
generalisation of the Reticulate Theorem\,---\,remains open.}
\camrev{More broadly, the construction reads as one direction of an
\emph{abstract interpretation}: \cref{thm:reticulate} makes it sound,
but the SCC quotient is label-blind, so the bare reticulate is not
faithful (\cref{sec:subtyping}). This suggests a graded hierarchy of
increasingly faithful reticulates, refining the abstraction until it
becomes reversible. The first step is concrete: we conjecture that replacing the SCC
quotient by a \emph{coinductive} quotient\,---\,unfolding each recursion
once and identifying states up to labelled bisimulation\,---\,would yield a
\emph{refined reticulate} that is again a bounded lattice and is
\emph{faithful}, retaining the labels, polarity, and recursive
continuity that end-identification and the SCC quotient forget, and
would characterise Gay--Hole subtyping exactly\,---\,%
$S \leq_{GH} T$ iff the width-preserving embedding between the refined
reticulates of $S$ and $T$ holds (supertype into subtype on branch,
subtype into supertype on selection).}

\camrev{A Galois connection between an implementation's concrete state
space and such a faithful reticulate would then yield a sound abstract
interpretation for protocol conformance~\cite{cousot1977abstract}.}
Our duality is structural\,---\,it identifies session types whose
state spaces coincide via polarity erasure
(\cref{lem:polarity}). Richer semantic dualities, between method
pairs (e.g., reader/writer, producer/consumer) or with directional
payload tags, fit naturally since the construction is invariant
under method renaming; \cite{bica2010}'s channel duality, viewed
through this lens, configures the two endpoints as a 2-arm
parallel composition with mutually dual arms (modulo $?/!$),
and a formal translation including the $?/!$ half is open.
Three grammar extensions preserve lattice structure and are the
subject of companion work: first-class sequential composition,
synchronous parallel, and replication.

The Reticulate Theorem suggests that session types are more
structured than previously recognised: the lattice emerges from the
syntax rather than being imposed by the theory.

\section*{Acknowledgements}

The author thanks the anonymous reviewers of ICE~2026, whose reviews
improved not only the paper but the theory itself, exposing gaps that
sharpened the results during the rebuttal round.

The author is grateful to Vasco~T.~Vasconcelos (Universidade de Lisboa)
and Uwe~Nestmann (Technische Universit\"at Berlin) for their supervision
during the author's doctoral studies: Vasconcelos for introducing the
author to session types and for the inspiration behind a
parallel-composition operator at type level, and Nestmann for the discussions on
lattice structure that shaped the reticulate. Any errors are the
author's alone.

\bibliographystyle{eptcs}
\bibliography{references}

\appendix
\section{Detailed Proofs}\label{appendix:proofs}

This appendix collects the manual proofs of the lemmas and
theorems of \cref{sec:statespace}, together with the structural
argument for \cref{prop:duality}.

\subsection*{Proof of \cref{prop:extrema}}
\begin{proof}
By induction on the construction: each case introduces
$q_\top$ and $q_\bot$ (End: single state; choice: fresh $q_\top$
plus shared $q_\bot$; recursion: back-edges preserve extrema in
the quotient; parallel: componentwise extrema).
\end{proof}

\subsection*{Proof of \cref{lem:end}}
\begin{proof}
The state space $\statespace{\End}$ consists of a single state $q$
that is both initial and terminal: $q_\top = q_\bot = q$.
The SCC quotient $\statespace{\End}/{\equiv}$ is the one-element poset $\{[q]\}$,
which satisfies all lattice axioms: $[q]$ is both top and bottom,
and $[q] \wedge [q] = [q] \vee [q] = [q]$.
\end{proof}

\subsection*{Proof of \cref{lem:absorption}}
\begin{proof}
Since $D$ is a proper upward-closed subset of $L$, $\bot \notin D$
(otherwise upward-closure would force $D = L$), so $L'$ retains
both extrema.
For $a, b \in L \setminus D$: the meet $a \wedge_L b$ lies in
$L \setminus D$ (the complement of an upward-closed set is
downward-closed), so it is preserved.
For the join: if $a \vee_L b \in L \setminus D$, it is preserved.
If $a \vee_L b \in D$, then $a \vee_{L'} b = \top$: any
non-$\top$ upper bound $u \in L \setminus D$ would satisfy
$u \geq a \vee_L b \in D$, placing $u \in D$ since $D$ is
upward-closed\,---\,contradicting $u \in L \setminus D$.
\end{proof}

\subsection*{Proof of \cref{lem:recursion}}
\begin{proof}
Adding back-edges from each placeholder $p_X$ to the initial
state $q_\top^S$ makes these states mutually reachable with
$q_\top^S$, hence they are collapsed into the same equivalence
class $[q_\top^S]$.
The set $D$ of quotient classes absorbed into $[q_\top^S]$ is
upward-closed: if $[s] \in D$ and $[s] \leq [r]$, then
$r \reach s$ (since $[r] \geq [s]$), $s \reach q_\top^S$
(since $[s] \in D$), and $q_\top^S \reach r$ (as initial
state); so $r$ and $q_\top^S$ are mutually reachable, giving
$[r] \in D$.
By the Termination clause, $[q_\bot]$ is reachable from
$[q_\top^S]$ in $\mathcal{L}(\srec{X}{S})/{\equiv}$, so
$[q_\bot] \notin D$; hence $D$ is a \emph{proper} upward-closed
subset of the body quotient.
By \cref{lem:absorption}, collapsing $D$ into a single top
element yields a bounded lattice.
\end{proof}

\subsection*{Proof of \cref{lem:equations}}
\begin{proof}
The construction connects the $k$ named sub-state-spaces by
redirecting placeholder transitions for each name $X_j$ to
$q_\top^{S_j}$.
Non-recursive references (acyclic names) are effectively inlined;
when non-reconvergent (the lemma's hypothesis) they preserve
lattice structure, reconvergent sharing being the
bounded-partial-lattice case of \cref{sec:reticulate-theorem}.
Cyclic references create back-edges that merge states into
SCCs, exactly as in the single-recursion case.
The set of quotient classes absorbed into each SCC is
upward-closed by the same argument as \cref{lem:recursion}.
Applying \cref{lem:absorption} iteratively (once per SCC created
by cyclic references) yields a bounded lattice. The SCC quotient is
determined by the reachability graph alone
(\cref{def:scc-quotient}); independence from the absorption order
is a confluence property mechanised as
\texttt{NamedDefinitions.\allowbreak absorption\_\allowbreak order\_\allowbreak independent}.
\end{proof}

\subsection*{Proof of \cref{lem:parallel}}
\begin{proof}
By the parallel case of \cref{def:construction} and the independence of
the $n$ branches (Termination ensures every branch reaches
$\End$; Parallel closedness ensures no shared variables), the
product inherits SCC structure componentwise\,---\,an SCC in the product is a product of
SCCs from each component:
\[
\statespace{S_1 \spar \cdots \spar S_n}/{\equiv} \;\cong\;
\textstyle\prod_{i=1}^{n}
(\statespace{S_i}/{\equiv_i}).
\]
The product of finitely many bounded lattices is a bounded
lattice (meets and joins are componentwise).
See \cref{fig:product} for an illustration with $n = 2$.
\end{proof}

\subsection*{Proof of \cref{thm:reticulate}}
\begin{proof}
By structural induction on~$S$, applying
\cref{lem:end,lem:choice,lem:recursion,lem:equations,lem:parallel}
case-by-case. Each case preserves the bounded lattice property,
completing the induction.
\end{proof}

\subsection*{Proof of \cref{prop:duality}}
\begin{proof}
By structural induction on $S$.

\emph{Base case} ($S = \End$):
$\mathit{dual}(\End) = \End$, so the state spaces are identical.

\emph{Choice} ($S = \sbranch{m_1{:}S_1, \ldots, m_n{:}S_n}$):
$\mathit{dual}(S) = \sselect{m_1{:}\mathit{dual}(S_1), \ldots,
m_n{:}\mathit{dual}(S_n)}$.
Branch and selection produce identical state spaces
(\cref{lem:polarity}), so it suffices that the continuations agree:
by induction $\statespace{S_i}/{\equiv} \cong
\statespace{\mathit{dual}(S_i)}/{\equiv}$ for each $i$, and the $n$
sub-isomorphisms with the identity on $q_\top, q_\bot$ give the
isomorphism of the whole choice state space.

\emph{Recursion} ($S = \srec{X}{S'}$):
$\mathit{dual}(\srec{X}{S'}) = \srec{X}{\mathit{dual}(S')}$.
The recursion construction redirects placeholder transitions
to the body's initial state.
Since dual acts on the body's constructors (swapping $\&$ and
$\oplus$) but preserves the placeholder and entry-state
structure, the same back-edges are created in both
$\statespace{\srec{X}{S'}}$ and
$\statespace{\srec{X}{\mathit{dual}(S')}}$, yielding
isomorphic SCC quotients by induction on the body.

\emph{Parallel} ($S = S_1 \spar \cdots \spar S_n$):
dual distributes over parallel, so
$\mathit{dual}(S) = \mathit{dual}(S_1) \spar \cdots \spar
\mathit{dual}(S_n)$.
By induction each factor is isomorphic; the product of
isomorphic lattices is isomorphic.

\emph{Named definitions}
($D = S, \; X_1 = S_1, \ldots, X_k = S_k$):
as in the recursion case, $\mathit{dual}$ acts on each equation body
but preserves the name-reference structure, so the same redirections
occur in both $\statespace{D}$ and $\statespace{\mathit{dual}(D)}$;
the SCC quotients are isomorphic by induction on each $S_i$.
\end{proof}


\end{document}